\providecommand{\tabularnewline}{\\}
\theoremstyle{plain}
\newtheorem{thm}{\protect\theoremname}
\theoremstyle{plain}
\newtheorem{lem}[thm]{\protect\lemmaname}
\DeclareMathAlphabet{\mathcal}{OMS}{cmsy}{m}{n}
\providecommand{\lemmaname}{Lemma}
\providecommand{\theoremname}{Theorem}
\providecommand{\lemmaname}{Lemma}
\providecommand{\theoremname}{Theorem}
\begin{document}
\global\long\def\Sgm{\boldsymbol{\Sigma}}%
\global\long\def\W{\boldsymbol{W}}%
\global\long\def\H{\boldsymbol{H}}%
\global\long\def\P{\mathbb{P}}%
\global\long\def\Q{\mathbb{Q}}%

\title{Monte-Carlo simulations for wall-bounded fluid flows\\
 via random vortex method}
\author{By Z. Qian\thanks{Mathematical Institute, University of Oxford, Oxford OX2 6GG, UK.,
and OSCAR, Suzhou, China. Email: \protect\protect\href{mailto:qianz@maths.ox.ac.uk}{qianz@maths.ox.ac.uk}} , Y. Qiu\thanks{Institut de Math\'ematiques de Toulouse, UMR 5219, Universit\'e
de Toulouse, CNRS, UPS, F-31062, Toulouse Cedex 9, France.}, L. Zhao\thanks{Oxford Suzhou Centre for Advanced Research, Suzhou, China},
and J. Wu\thanks{Department of Mathematics, Swansea University, Swansea SA1 8EN, UK.
Email: j.l.wu@swansea.ac.uk } }
\maketitle
\begin{abstract}
In this paper a Monte-Carlo method for simulating the motion of fluid
flow moving along a solid wall is proposed. The random vortex method
in the present paper is established by using the reflection technology
and perturbation technique. The Monte-Carlo method based on this random
vortex dynamic may be implemented, and several Monte-Carlo simulations
are then carried out for the flows near the solid wall.

\medskip{}

\emph{Key words}: diffusion processes, incompressible fluid flow,
Monte-Carlo simulation, random vortex method

\medskip{}

\emph{MSC classifications}: 76M35, 76M23, 60H30, 65C05, 68Q10, 
\end{abstract}

\section{Introduction}

Our goal in this work is to present a new numerical method for computing
solutions of the Navier-Stokes equations which describe viscous fluid
flows passing a plate. The technology in the paper may be generalised
to three dimensional case, while substantial modifications are required,
hence in the present work only two dimensional case will be dealt
with and the three dimensional case will be published in separate
articles. The numerical schemes for solving fluid dynamics equations
are based on the random vortex dynamics, which were developed in the
past mainly for turbulent flows away from physical boundaries.

The key idea in the random vortex method proposed in Chorin \citep{Chorin 1973}
may be briefly described for two dimensional flows as the following,
cf. \citep{Goodman1987}, \citep{Long1988}, \citep{CottetKoumoutsakos2000}
and \citep{Majda and Bertozzi 2002} for details. A fluid flow may
be described by its velocity vector field $u(x,t)$, and the fundamental
problem in fluid dynamics (and in the study of turbulence) is to extract
information about the velocity $u(x,t)$ from its equations of motion.
It is well known that, regardless of the nature of fluid flows, determining
the vector field $u(x,t)$ is equivalent to determining its integral
curves $X^{\xi}$ of the dynamics 
\[
\frac{\textrm{d}}{\textrm{d}t}X_{t}^{\xi}=u(X_{t}^{\xi},t),\quad X_{0}^{\xi}=\xi,
\]
where $t\rightarrow X_{t}^{\xi}$ is considered as the trajectory
of the ``fluid'' particle issued from location $\xi$. In order
to employ the idea of Monte-Carlo methods, one may consider $u(x,t)$
as the velocity of ``imaginary'' Brownian fluid particles (this
idea to the best knowledge of the present authors is due to G. I.
Taylor \citep{Taylor1921}), whose trajectories, still denoted by
$X^{\xi}$ (while it is a random process on a probability space),
may be determined by solving It{\^o}'s stochastic differential equation
(SDE, cf. \citep{Ikeda and Watanabe 1989}) 
\begin{equation}
\textrm{d}X_{t}^{\xi}=u(X_{t}^{\xi},t)\textrm{d}t+\sqrt{2\nu}\textrm{d}B_{t},\quad X_{0}^{\xi}=\xi\label{sde1}
\end{equation}
for $\xi\in\mathbb{R}^{2}$, where $\nu>0$ is a constant (which will
be the kinematic viscosity of the fluid). The key idea in any Monte-Carlo
simulation is to average these trajectories of Brownian fluid particles
and recover therefore the flow velocity $u(x,t)$. In order to implement
this strategy, one first solves \emph{the closure problem:} by utilising
the motion equation for $u(x,t)$ and the distribution of $X$ to
eliminate formally the vector field $u(x,t)$ in \eqref{sde1} in
terms of the distribution of the random field $X(x,t)$, and therefore
the SDE \eqref{sde1} is closed. In random vortex methods, this is
achieved by using the motion equation of its vorticity $\omega=\nabla\wedge u$,
a vector field with its components $\omega^{i}=\varepsilon^{ijk}\frac{\partial u^{k}}{\partial x_{j}}$.
This scheme can be explained best for two dimensional incompressible
fluid flows. Let us explicate this point in more details below.

For a two dimensional fluid flow with viscosity $\nu>0$, the velocity
$u(x,t)$ satisfies the Navier-Stokes equations 
\begin{equation}
\frac{\partial u}{\partial t}+(u\cdot\nabla)u-\nu\Delta u+\nabla P=0,\quad\textrm{ and }\nabla\cdot u=0,\label{t-02-1}
\end{equation}
so that its vorticity $\omega=\nabla\wedge u$, a scalar function,
evolves according to the vorticity transport equation: 
\begin{equation}
\frac{\partial}{\partial t}\omega+(u\cdot\nabla)\omega-\nu\Delta\omega=0,\quad\textrm{ and }\nabla\wedge u=\omega.\label{V-001}
\end{equation}
Hence $\omega$ has the following integral representation
\begin{equation}
\omega(x,t)=\int p_{u}(0,\xi,t,x)\omega_{0}(\xi)\textrm{d}\xi,\label{rep-t-01}
\end{equation}
where $\omega_{0}$ is the initial vorticity, and $p_{u}(\tau,\xi,t,x)$
is the transition probability density function of Taylor's diffusion
defined by \eqref{sde1} (whose infinitesimal generator is $\nu\Delta+u\cdot\nabla$).
$p_{u}(\tau,\xi,t,x)$ coincides with the Green function associated
with the forward operator $\frac{\partial}{\partial t}-\nu\Delta+u\cdot\nabla$.
Since $\nabla\cdot u=0$ and $\nabla\wedge u=\omega$, the Biot-Savart
law holds: 
\begin{equation}
u(x,t)=-\int G(x,\cdot)\nabla\wedge\omega=\int K(x,y)\omega(y,t)\textrm{d}y,\label{Gr-001}
\end{equation}
where $G$ is the Green function of the Laplacian on $\mathbb{R}^{2}$
and $K=\nabla G$. Putting the representations for $\omega$ and for
$u$ together we obtain 
\begin{align*}
u(x,t) & =\int\int K(x,y)p_{u}(0,\xi,t,y)\omega_{0}(\xi)\textrm{d}\xi\textrm{d}y\\
 & =\int\mathbb{E}\left[K(x,X_{t}^{\xi})\right]\omega_{0}(\xi)\textrm{d}\xi.
\end{align*}
Therefore the SDE \eqref{sde1} can be closed in the following manner
\begin{equation}
\begin{cases}
\textrm{d}X_{t}^{\xi}=\left.\int\mathbb{E}\left[K(x,X_{t}^{\eta})\right]\omega_{0}(\eta)\textrm{d}\eta\right|_{x=X_{t}^{\xi}}\textrm{d}t+\sqrt{2\nu}\textrm{d}B_{t},\\
X_{0}^{\xi}=\xi\in\mathbb{R}^{2}.
\end{cases}\label{cl-2D}
\end{equation}

SDE (\ref{cl-2D}) can be used for designing numerical schemes for
computing numerically the velocity field $u(x,t)$ accordingly.

While, like many other beautiful ideas in mathematics, when applying
these ideas to concrete situations, one has to overcome several obstacles.
This is not exceptional when one applies the ideas of random vortex
methods to wall-bounded flows, and we need to reformulate the random
vortex method which suits for the study of wall-bounded flows. One
has to consider not only the boundary condition for the velocity,
but also the possible constraint on the vorticity motion at the solid
wall. The vorticity at the wall can not be determined before one is
able to solve the Navier-Stokes equations, which therefore poses substantial
difficulty in implementing the random vortex method. We make the following
key observation: the Biot-Savart law holds well for a bounded domain
as the velocity needs to satisfy the no slip condition, while we may
recover the velocity $u(x,t)$ not only in terms of its vorticity
$\omega$, but we may also obtain $u(x,t)$ by using some perturbations
of $\omega$ which satisfy the right boundary condition but still
solve certain parabolic equations. Therefore we are able to formulate
a family of random vortex dynamics for wall-bounded flows by using
the reflection principle and one parameter family of perturbations
of the vorticity.

We should mention that computational fluid dynamics (CFD) is a huge
subject, and various numerical approaches have been developed such
as Direct Numerical Simulations (DNS) (cf. \citep{OrszagPatterson1972},
\citep{Spalart1988}, \citep{MoinMashesh1998}, see also \citep{Cuvelier-Segal-van Steenhoven1986,Temam2000,Wesseling2001,Sengupta-Bhaumik2019}
and the references therein for details), Large Eddy Simulations (LES)
(cf. \citep{Smagorinsky1963}, \citep{Lilly1967}, \citep{BerselliIliescuLayton2006,LesieurMetaisComte2005}
and the literature cited therein for further reading), Probability
Density Function (PDF) (cf. \citep{Pope2000} and \citep{LaskariMcKeon}
for example) and other technologies have been developed, and have
become increasingly important. A few numerical experiments for fluid
flows, including turbulent flows, within their boundary layers, have
been carried out, which provide substantial information about wall-bounded
fluid flows, for example \citep{ChauhanPhilipetl2014}, \citep{HEISEL2018},
\citep{MCKEON2010}, \citep{DawsonMcKeon2019}, \citep{HeadBandyopadhyay1981},
\citep{RaiMoin1993}, \citep{SpalartWatmuff1993}, \citep{WuMoin2008,WuMoin2009,WuMoinHickey2014}
and etc. Of course there is a large number of papers dealing with
many different aspects of the wall-bounded flows, the authors must
apologise for failing to mention many excellent contributions, and
the interested reader may consult \citep{HirschelCousteixKordulla2014},
\citep{Keller1978}, \citep{Schlichting9th-2017} and the papers cited
therein. Stochastic simulation schemes have been studied too, mainly
under the name of random vortex methods which were initiated in \citep{Chorin 1973},
cf. \citep{CottetKoumoutsakos2000} and \citep{Majda and Bertozzi 2002}
for a comprehensive account. While to the best knowledge of the present
authors, random vortex methods are mainly developed for fluid flows
moving freely without boundary constraint. The goal of the present
paper is therefore to formulate a Monte-Carlo method for numerically
calculating solutions for simple wall bounded incompressible fluid
flows. According to an important discovery by Prandtl \citep{Prandtl1904},
the viscosity however how small demonstrates its effect in a thin
layer close to the solid wall, in order to ensure the fluid flow to
obey the no slip boundary condition. The equations of motions in a
thin boundary-layer can be simplified to the boundary layer equation
which is much simpler than the Navier-Stokes equations. The integration
of the boundary layer equation may be used to determine some very
important aspects of wall bounded fluid flows such as the stress immediately
applied to the solid wall, which provides with us the necessary information,
together with the stochastic representations we are going to establish,
to implement Monte-Carlo simulations for boundary turbulent layer
flows. This line of research will be addressed however in a future
work.

The rest of the paper is organised as follows. Section 2 presents
several technical lemmas for later use. In Section 3, two dimensional
flows are investigated, and a Monte-Carlo method is established in
Section 4. The paper ends with several simulation experiments for
the Monte-Carlo simulation.

\section{The vorticity transport equation}

In this paper we aim to implement the random vortex method and propose
a Monte-Carlo scheme for numerically simulating the fluid motion near
the thin boundary layer. We consider a fluid flow moving in the half
space 
\begin{equation}
D=\left\{ x=(x_{1},\cdots,x_{d-1},x_{d})\in\mathbb{R}^{d}:x_{d}<0\right\} \label{dom-01-1}
\end{equation}
where $d=2$ or $3$, whose velocity is denoted by $u=(u^{1},\ldots,u^{d})$
is a time-dependent vector field in $D$. The velocity $u$ is determined
by the Navier-Stokes equations: 
\begin{equation}
\frac{\partial}{\partial t}u+(u\cdot\nabla)u-\nu\Delta u+\nabla P=F\quad\textrm{ in }D\label{meq-01}
\end{equation}
and 
\begin{equation}
\nabla\cdot u=0\quad\textrm{ in }D,\label{meq-02}
\end{equation}
where $P$ is the pressure and $F$ is the external force applied
to the fluid. The velocity $u$ has to satisfy the no slip condition
which says that $u(x,t)$ vanishes for $x\in\partial D$ and $t>0$.
By taking exterior derivative of both sides of the Navier-Stokes equation,
we obtain 
\[
\frac{\partial}{\partial t}\omega+(u\cdot\nabla)\omega-\nu\Delta\omega-(\omega\cdot\nabla)u=G\quad\textrm{ in }D
\]
where $G=\nabla\wedge F$. The boundary value of $\omega$ along the
solid wall $\partial D$ (it will be called the boundary vorticity
in short) can not be specified, and must be computed via (\ref{meq-01},
\ref{meq-02}).

Our first task is to identify the boundary vorticity in terms of measurable
fluid dynamics variables. Observe that $\frac{\partial}{\partial x_{d}}$
is the unit normal pointing outward and $\frac{\partial}{\partial x_{1}}$,
$\ldots$, $\frac{\partial}{\partial x_{d-1}}$ form a basis of the
tangent space along $\partial D$. It is perhaps convenient to introduce
the following notations. If $T$ is a tensor field then $T^{\perp}$
and $T^{\parallel}$ denote the normal and tangential parts of $T$
respectively. Let $S=(S_{ij})$ be the symmetric tensor field, called
the rate-of-strain. By definition 
\begin{equation}
S_{ij}=\frac{\partial u^{i}}{\partial x_{j}}+\frac{\partial u^{j}}{\partial x_{i}}\label{stress-01}
\end{equation}
so along its boundary, the normal components of the stress tensor
are $S_{13}$, $S_{23}$ and $S_{33}$. For an incompressible fluid,
$\textrm{tr}(S_{ij})=0$, together with the no slip condition, it
follows that $S_{11}=S_{22}=S_{33}=0$ on $\partial D$. Therefore
the normal part of the stress $S^{\perp}=(S_{13},S_{23})$, which
are the shear stress.

If $d=3$, the vorticity $\omega=\nabla\wedge u$ has its components
$\omega^{i}=\varepsilon^{ijk}\frac{\partial u^{j}}{\partial x_{k}}$,
so that the boundary vorticity 
\begin{equation}
\left.\omega^{1}\right|_{\partial D}=-\left.\frac{\partial u^{2}}{\partial x_{3}}\right|_{\partial D},\quad\left.\omega^{2}\right|_{\partial D}=\left.\frac{\partial u^{1}}{\partial x_{3}}\right|_{\partial D},\quad\left.\omega^{3}\right|_{\partial D}=0.\label{bd-c1}
\end{equation}
Due to the no slip boundary condition, 
\[
\left.\frac{\partial u^{3}}{\partial x_{2}}\right|_{\partial D}=\left.\frac{\partial u^{3}}{\partial x_{1}}\right|_{\partial D}=0
\]
therefore 
\begin{equation}
\left.\omega^{1}\right|_{\partial D}=-\left.S_{23}\right|_{\partial D},\quad\left.\omega^{2}\right|_{\partial D}=\left.S_{13}\right|_{\partial D},\quad\left.\omega^{3}\right|_{\partial D}=0.\label{bd-c2}
\end{equation}
Hence the tangent components $\omega^{\Vert}$ of the vorticity $\omega$
along the solid wall can be identified with the normal components
$S^{\perp}$ of the stress at the wall.

For a two dimensional flow, where $d=2$, the vorticity $\omega$
can be identified with the scalar function $\omega=\frac{\partial u^{2}}{\partial x_{1}}-\frac{\partial u^{1}}{\partial x_{2}}$.
As we have indicated already, the vorticity transport equation has
a simpler form: 
\begin{equation}
\frac{\partial}{\partial t}\omega+(u\cdot\nabla)\omega-\nu\Delta\omega=G,\quad\textrm{ in }D,\label{2d-v4}
\end{equation}
that is, the vorticity equation is a scalar ``linear'' parabolic
equation (if $u$ is considered as a known dynamic variable) without
non-linear stretching term. It is however in contrast with the case
of isotropic turbulence, turbulence may be built up for two dimensional
flows (i.e. flows with certain symmetries) due to boundary layer phenomena
near the solid wall. For two dimensional flows, the boundary vorticity
\begin{equation}
\left.\omega\right|_{\partial D}=-\left.S_{12}\right|_{\partial D}\label{2d-shear-s1}
\end{equation}
where $S_{12}$ is the (normal) rate-of-strain at the solid wall.

\section{Random vortex dynamics}

In this section the mathematical framework is set up for implementing
Monte-Carlo simulations for wall-bounded flows. The method is based
on the random vortex dynamics, and the key step is to derive a functional
integral representations for a family of modified vorticity dynamical
variables.

Some of the techniques work for 
\begin{equation}
D=\left\{ x=(x_{1},\cdots,x_{d-1},x_{d})\in\mathbb{R}^{d}:x_{d}<0\right\} .\label{dom-01}
\end{equation}
for any dimension $d\geq2$, while the reflection in $\mathbb{R}^{d}$
about the hyperspace: $x_{d}=0$ is the mapping which sends $x=(x_{1},\cdots,x_{d-1},x_{d})$
to $\bar{x}=(x_{1},\cdots,x_{d-1},-x^{d})$.

\subsection{The Biot-Savart law}

The first ingredient needed for deriving the random vortex dynamics
is a version of the Biot-Savart law for $D$. By definition $\omega=\nabla\wedge u$
and $\nabla\cdot u=0$, hence $\Delta u=-\nabla\wedge\omega$ in $D$.
Since $u$ vanishes on $\partial D$, according to Green formula 
\begin{equation}
u(x,t)=-\int_{D}\varGamma_{D}(x,y)\nabla\wedge\omega(y,t)\textrm{d}y\label{green-01}
\end{equation}
for $x\in D$, where $\varGamma_{D}$ denotes the Green kernel of
$D$. Suppose $u(x,t)$ and $\omega(x,t)$ decay to zero sufficiently
fast, then we may perform integration by parts, using the no slip
condition for $u$, it follows that 
\begin{equation}
u(x,t)=\int_{D}K(x,y)\wedge\omega(y,t)\textrm{d}y\label{3d-bs-law}
\end{equation}
where $K(x,y)=\nabla_{y}\varGamma_{D}(x,y)$, which is called the
Biot-Savart law for $D$. In dimension two one may prefer to formulate
it in slightly different form as $\omega$ is scalar. That is, in
dimension two 
\begin{equation}
u^{i}(x,t)=\int_{D}K^{i}(x,y)\omega(y,t)\textrm{d}y,\quad\textrm{ for }i=1,2,\label{2d-bs-law}
\end{equation}
where 
\begin{equation}
K^{1}(x,y)=\frac{\partial}{\partial y_{2}}\varGamma_{D}(x,y),\quad K^{2}(x,y)=-\frac{\partial}{\partial y_{1}}\varGamma_{D}(x,y).\label{K12-2d}
\end{equation}

It will reduce the computational cost if an explicit formula for the
singular integral kernel $K(x,y)$ is available. To this end we recall
that the Green function in $\mathbb{R}^{d}$ is given by 
\begin{equation}
\varGamma(x,y)=\frac{1}{2\pi}\log|x-y|\quad\textrm{ for }d=2\label{Green-01}
\end{equation}
and 
\begin{equation}
\varGamma(x,y)=\frac{1}{\kappa_{d}(2-d)}|x-y|^{2-d}\quad\textrm{ if }d\geq3\label{Green-02}
\end{equation}
where $\kappa_{d}$ is the area of the unit sphere in $\mathbb{R}^{d}$.
By the reflection principle, the Green function for $D$ (subject
to the Dirichlet boundary condition) has an explicit formula 
\begin{equation}
\varGamma_{D}(x,y)=\varGamma(x,y)-\varGamma(x,\bar{y})\label{Green-03}
\end{equation}
for $x,y\in D$.

For example in dimension two we have the following facts which can
be established for $D$ (which is unbounded) with slight modifications
of the arguments in any standard textbooks. 
\begin{lem}
\label{lem3.2}Suppose $u\in C^{2}(D)\cap C^{1}(\overline{D})$ solves
the Poisson equation 
\begin{equation}
\Delta u=f\quad\textrm{ in }D,\quad\left.u\right|_{\partial D}=0\label{Pss-01}
\end{equation}
where $f\in L^{1}(D)$, and $u\rightarrow0$ and $\nabla u\rightarrow0$
at infinity. Then 
\begin{equation}
u(x)=\int_{D}\varGamma_{D}(x,y)f(y)\textrm{d}y\quad\textrm{ for }x\in D.\label{rep-031}
\end{equation}
\end{lem}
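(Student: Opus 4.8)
The plan is to represent the solution of the inhomogeneous Dirichlet problem as a Newtonian-type potential and verify it satisfies the stated boundary value problem, then invoke uniqueness to conclude the two agree. Concretely, I would set $v(x) = \int_D \varGamma_D(x,y) f(y)\,\mathrm{d}y$ and show: (i) $v$ is well-defined and belongs to $C^2(D)\cap C^1(\overline{D})$ with $v\to 0$ and $\nabla v\to 0$ at infinity; (ii) $\Delta v = f$ in $D$ and $v|_{\partial D}=0$; (iii) therefore $w = u - v$ is harmonic in $D$, vanishes on $\partial D$, and decays at infinity, whence $w\equiv 0$.

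For step (ii), the identity $\varGamma_D(x,y) = \varGamma(x,y) - \varGamma(x,\bar y)$ splits $v$ into the standard Newtonian potential $\int_D \varGamma(x,y) f(y)\,\mathrm{d}y$ plus a correction $-\int_D \varGamma(x,\bar y) f(y)\,\mathrm{d}y$. The first term is the classical volume potential: by the standard interior estimates (e.g.\ the argument in Gilbarg–Trudinger for Newtonian potentials), it lies in $C^2$ locally and its Laplacian equals $f$. The second term has kernel $y\mapsto \varGamma(x,\bar y)$; since $x\in D$ forces $\bar x \notin \overline D$, the singularity $|x-\bar y| = |\bar x - y|$ never vanishes for $y\in D$, so this term is smooth in $x\in D$ and harmonic there (it is a superposition of fundamental solutions centered at points $\bar y$ outside $D$). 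Adding, $\Delta v = f$. The boundary condition is immediate from $\varGamma_D(x,y)=0$ whenever $x\in\partial D$, because then $\bar x = x$ so $\varGamma(x,y)=\varGamma(x,\bar y)$; one only needs enough regularity up to $\partial D$ (which is where $f\in L^1$ and a mild decay/integrability at the boundary is used) to pass the limit inside the integral, giving $v\in C^1(\overline D)$ with $v|_{\partial D}=0$.

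For the decay at infinity in step (i): because $D$ is unbounded, one cannot quote a bounded-domain result verbatim. I would split the integral over $D\cap B_R(0)$ and its complement; on the far region $\varGamma$ is controlled using $f\in L^1$, and on the near region one uses that $|x|\to\infty$ pushes $x$ away from $\mathrm{supp}$-mass of $f$ in the $L^1$ sense, so both $v(x)$ and $\nabla v(x)$ tend to $0$. This is the point requiring the ``slight modifications of the arguments in standard textbooks'' alluded to in the text. Finally, step (iii) is the Liouville-type uniqueness statement: a harmonic function on the half-space, continuous up to the boundary, vanishing on $\partial D$ and tending to $0$ at infinity, is identically zero—extend $w$ by odd reflection across $x_d=0$ to get an entire harmonic function on $\mathbb{R}^d$ that vanishes at infinity, hence $w\equiv 0$ by Liouville.

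The main obstacle is the careful handling of the boundary and infinity simultaneously under only the hypothesis $f\in L^1(D)$: verifying $v\in C^1(\overline D)$ and the decay $\nabla v\to 0$ requires splitting the domain and is the one place where the unboundedness of $D$ genuinely matters; the interior computation $\Delta v = f$ and the vanishing on $\partial D$ are essentially the textbook Newtonian-potential argument combined with the explicit reflected Green kernel.
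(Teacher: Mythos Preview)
Your route is different from the paper's and carries an extra burden that the paper sidesteps. The paper does not construct the potential $v$ and invoke uniqueness; instead it applies Green's second identity to $u$ and $\varGamma_D(x,\cdot)$ on the truncated region $D\cap B_R$, obtaining $u(x)=\int_{D\cap B_R}\varGamma_D(x,y)f(y)\,\mathrm{d}y$ plus boundary terms on $\partial D$ (which vanish because both $u$ and $\varGamma_D$ do) and on the large arc $\partial B_R\cap D$. The entire proof then reduces to showing those arc terms vanish as $R\to\infty$, and for this the paper records the one nontrivial fact needed: in dimension two
\[
\varGamma_D(x,y)=\frac{1}{4\pi}\log\!\left(1-\frac{4x_2y_2}{|y-\bar x|^2}\right)\sim-\frac{1}{\pi}\frac{x_2y_2}{|y-\bar x|^2}\quad(|y|\to\infty),
\]
so $\varGamma_D$ decays like $|y|^{-1}$ rather than growing like $\log|y|$. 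Combined with $u,\nabla u\to 0$, the arc contribution goes to zero.

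Your step (i), the decay of $v$ and $\nabla v$, is exactly where this same cancellation is needed, and your sketch does not supply it: you propose to ``control $\varGamma$ using $f\in L^1$'', but in two dimensions the free-space kernel $\varGamma(x,y)=\frac{1}{2\pi}\log|x-y|$ is unbounded, so neither piece $\int_D\varGamma(x,y)f(y)\,\mathrm{d}y$ nor $\int_D\varGamma(x,\bar y)f(y)\,\mathrm{d}y$ need decay (or even be finite) individually under $f\in L^1$. You must keep them together and use precisely the asymptotic above. There is also a secondary issue: verifying $\Delta v=f$ with $v\in C^2(D)$ for a Newtonian potential typically requires more than $f\in L^1$ (e.g.\ H\"older continuity), whereas the Green-identity route never asks you to prove regularity of the integral---it works directly with the given $u\in C^2$. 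So your approach is viable, but to make it rigorous you would end up importing the paper's key asymptotic and adding a regularity argument that the paper's method avoids entirely.
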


\begin{proof}
The only fact we need is the asymptotic of $\varGamma_{D}(x,y)$ as
$y\rightarrow\infty$. If $x=(x_{1},x_{2})\in D$ and $y=(y_{1},y_{2})\in D$,
then 
\begin{equation}
\varGamma_{D}(x,y)=\frac{1}{4\pi}\log\left(1-\frac{4x_{2}y_{2}}{|y-\bar{x}|^{2}}\right)\quad\textrm{ for }x,y\in D,\label{2D-G-01}
\end{equation}
which implies that 
\begin{equation}
\varGamma_{D}(x,y)\sim-\frac{1}{\pi}\frac{x_{2}y_{2}}{|y-\bar{x}|^{2}}\quad\textrm{ as }|y|\rightarrow\infty.\label{2D-G-02}
\end{equation}
The Green formula then follows from the standard argument. 
\end{proof}
\begin{lem}
\label{lem3.3}Suppose $u=(u^{1},u^{2})\in C^{2}(D)\cap C^{1}(\overline{D})$,
$u=0$ on $\partial D$, such that $u\rightarrow0$ and $|\nabla u|\rightarrow0$
at the infinity and $\nabla\cdot u=0$ in $D$. Let $\omega=\nabla\wedge u$
and assume that both $\omega,\nabla\omega\in L^{1}(D)$. Then 
\begin{equation}
u^{i}(x)=\int_{D}K^{i}(x,y)\omega(y)\textrm{d}y\quad\textrm{ for all }x\in D\label{B-S law}
\end{equation}
for $i=1,2$, where 
\begin{equation}
K^{1}(x,y)=\frac{1}{2\pi}\left(\frac{y_{2}-x_{2}}{|y-x|^{2}}-\frac{y_{2}+x_{2}}{|y-\bar{x}|^{2}}\right)\label{eq:qq12-1}
\end{equation}
and 
\begin{equation}
K^{2}(x,y)=\frac{1}{2\pi}\left(\frac{y_{1}-x_{1}}{|y-\bar{x}|^{2}}-\frac{y_{1}-x_{1}}{|y-x|^{2}}\right).\label{eq:qq13-1}
\end{equation}
\end{lem}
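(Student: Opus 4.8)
The plan is to reduce the Biot--Savart law \eqref{B-S law} to an application of Lemma~\ref{lem3.2}, exactly in the spirit of the heuristic derivation \eqref{green-01}--\eqref{3d-bs-law}, but carried out carefully so that the differentiation of the Green kernel under the integral sign and the integration by parts at infinity are justified. First I would recall that since $\nabla\cdot u=0$ and $\omega=\nabla\wedge u=\partial_1 u^2-\partial_2 u^1$, each component $u^i$ satisfies a Poisson equation: a direct computation gives $\Delta u^1=-\partial_2\omega$ and $\Delta u^2=\partial_1\omega$ in $D$, while $u^i=0$ on $\partial D$ by the no-slip hypothesis. The decay hypotheses on $u$ and $\nabla u$, together with $\omega,\nabla\omega\in L^1(D)$, put us in a position to invoke Lemma~\ref{lem3.2} with $f=-\partial_2\omega$ (resp.\ $f=\partial_1\omega$), yielding
\[
u^1(x)=-\int_D\varGamma_D(x,y)\,\partial_{y_2}\omega(y)\,\mathrm{d}y,\qquad u^2(x)=\int_D\varGamma_D(x,y)\,\partial_{y_1}\omega(y)\,\mathrm{d}y.
\]

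Next I would integrate by parts in $y$ to move the derivative off $\omega$ and onto the kernel. For $u^1$, integrating by parts in $y_2$ over $D=\{y_2<0\}$ produces a boundary term on $\{y_2=0\}$ involving $\varGamma_D(x,y)\,\omega(y)$; but $\varGamma_D(x,y)=0$ for $y\in\partial D$ by \eqref{Green-03} (indeed $\bar y=y$ there), so that term vanishes, and one is left with $u^1(x)=\int_D \partial_{y_2}\varGamma_D(x,y)\,\omega(y)\,\mathrm{d}y=\int_D K^1(x,y)\omega(y)\,\mathrm{d}y$ by the definition \eqref{K12-2d}; similarly $u^2(x)=-\int_D\partial_{y_1}\varGamma_D(x,y)\,\omega(y)\,\mathrm{d}y=\int_D K^2(x,y)\omega(y)\,\mathrm{d}y$. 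The contributions at spatial infinity are controlled by the decay estimate \eqref{2D-G-02} for $\varGamma_D$ and the assumption $\omega\in L^1$, $\nabla\omega\in L^1$. It then remains to compute $\partial_{y_2}\varGamma_D$ and $\partial_{y_1}\varGamma_D$ explicitly. Using $\varGamma_D(x,y)=\frac{1}{2\pi}\log|x-y|-\frac{1}{2\pi}\log|x-\bar y|$ and $|x-\bar y|=|y-\bar x|$, a short differentiation gives
\[
\partial_{y_2}\varGamma_D(x,y)=\frac{1}{2\pi}\left(\frac{y_2-x_2}{|y-x|^2}-\frac{y_2+x_2}{|y-\bar x|^2}\right),\qquad
\partial_{y_1}\varGamma_D(x,y)=\frac{1}{2\pi}\left(\frac{y_1-x_1}{|y-x|^2}-\frac{y_1-x_1}{|y-\bar x|^2}\right),
\]
which are precisely \eqref{eq:qq12-1} and the negative of \eqref{eq:qq13-1}, matching the sign conventions there.

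The routine parts are the two differentiations of $\varGamma_D$ and the bookkeeping of signs between \eqref{K12-2d} and the stated formulas. The step that deserves the most care — and which I expect to be the main obstacle — is justifying that no boundary contribution survives at infinity when integrating by parts: one must check, using $\varGamma_D(x,y)=O(|y|^{-2})$ and $\nabla_y\varGamma_D(x,y)=O(|y|^{-2})$ from \eqref{2D-G-02}, that $\int_{\partial B_R\cap D}\varGamma_D(x,y)\,\omega(y)\,n\,\mathrm{d}\sigma(y)\to0$ along a suitable sequence $R\to\infty$, which uses the integrability of $\omega$ together with its decay. This is the same analytic input already used in the proof of Lemma~\ref{lem3.2}, so it can be quoted; once it is in place, the identity \eqref{B-S law} follows and the explicit kernels \eqref{eq:qq12-1}--\eqref{eq:qq13-1} are obtained by direct computation.
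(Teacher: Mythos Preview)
Your proposal is correct and is exactly the approach the paper takes: its proof of Lemma~\ref{lem3.3} consists of the single sentence ``This follows from the Green formula and integration by parts,'' and you have simply written out what that sentence means. One minor quibble: the asymptotic \eqref{2D-G-02} gives $\varGamma_D(x,y)=O(|y|^{-1})$ as $|y|\to\infty$ (not $O(|y|^{-2})$, since $y_2$ can be of order $|y|$), but this does not affect the argument---the boundary term at infinity still vanishes along a suitable sequence $R\to\infty$ because $\omega\in L^1(D)$.
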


\begin{proof}
This follows from the Green formula and integration by parts. 
\end{proof}
The following lemma provides an elementary fact which will be used
later on. 
\begin{lem}
\label{lem2.5-1}Let $a<b$ be two numbers and $x=(x_{1},x_{2})\in D$
fixed. Then $K^{1}(x,y)$ and $K^{2}(x,y)$ are integrable on $[a,b]\times(-\infty,0)$.
Let 
\[
H^{i}(x)=\int_{(a,b)\times(-\infty,0)}K^{i}(x,y)\textrm{d}y
\]
where $i=1,2$. Then 
\begin{equation}
H^{1}(x)=-\frac{1}{2}x_{2}\left(\textrm{sgn}(b-x_{1})-\textrm{sgn}(a-x_{1})\right)\label{K1-int01-1}
\end{equation}
and 
\begin{equation}
H^{2}(x)=\frac{x_{2}}{2\pi}\ln\frac{(b-x_{1})^{2}+x_{2}^{2}}{(a-x_{1})^{2}+x_{2}^{2}}+\frac{1}{\pi}\left((b-x_{1})\arctan\frac{x_{2}}{b-x_{1}}-(a-x_{1})\arctan\frac{x_{2}}{a-x_{1}}\right).\label{K2-int-02}
\end{equation}
In particular 
\begin{equation}
\int_{-\infty}^{0}\int_{x_{1}-A}^{x_{1}+A}K^{2}(x,y)\textrm{d}y_{1}\textrm{d}y_{2}=0\label{K2-int03}
\end{equation}
for every $A>0$ and $x=(x_{1},x_{2})$ with $x_{2}<0$. 
\end{lem}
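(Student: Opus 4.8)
The plan is to establish all three claims by carrying out the two one-dimensional integrations explicitly. The one structural point to keep in mind throughout is that each of $K^{1}$ and $K^{2}$ is a free-space Biot--Savart term minus its mirror image about $\{y_{2}=0\}$: neither summand is integrable on the half-strip by itself in the unbounded $y_{2}$-direction, but the cancellation between them is, so the two pieces must be kept together when one integrates in $y_{2}$. I would first settle the integrability assertion. Near $y=x$ the worst behaviour of $K^{i}(x,y)$ is $O(|y-x|^{-1})$, which is integrable in the plane, so the only real issue is the tail $y_{2}\to-\infty$; expanding $\frac{y_{2}\mp x_{2}}{(y_{1}-x_{1})^{2}+(y_{2}\mp x_{2})^{2}}$ in powers of $1/y_{2}$ and subtracting, the leading $1/y_{2}$ terms cancel, leaving $K^{i}(x,y)=O(y_{2}^{-2})$ uniformly for $y_{1}\in[a,b]$; hence $K^{i}\in L^{1}([a,b]\times(-\infty,0))$ and Fubini applies, which lets me integrate iteratively in whichever order is convenient.

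For $H^{2}$ I would integrate in $y_{2}$ first. Each of the two terms of $K^{2}$ is separately integrable in $y_{2}$, being $O(y_{2}^{-2})$, and with $\int \frac{\mathrm{d}t}{c^{2}+t^{2}}=\frac{1}{c}\arctan\frac{t}{c}$ the $\pi/(2|c|)$ contributions from $y_{2}=-\infty$ cancel between the two terms, giving $\int_{-\infty}^{0}K^{2}(x,y)\,\mathrm{d}y_{2}=\frac{1}{\pi}\arctan\frac{x_{2}}{y_{1}-x_{1}}$. Integrating this in $y_{1}$ with $\int\arctan\frac{\beta}{v}\,\mathrm{d}v=v\arctan\frac{\beta}{v}+\frac{\beta}{2}\log(v^{2}+\beta^{2})$ --- an antiderivative that is continuous across $v=0$, so that no special care is needed when $x_{1}\in(a,b)$ --- then yields \eqref{K2-int-02}. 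The identity \eqref{K2-int03} is immediate from \eqref{K2-int-02} with $a=x_{1}-A$ and $b=x_{1}+A$: the logarithm is $\log 1=0$ and the two arctangent terms are negatives of each other.

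For $H^{1}$ I would instead integrate in $y_{1}$ first, via $\int_{a}^{b}\frac{y_{2}\mp x_{2}}{(y_{1}-x_{1})^{2}+(y_{2}\mp x_{2})^{2}}\,\mathrm{d}y_{1}=\arctan\frac{b-x_{1}}{y_{2}\mp x_{2}}-\arctan\frac{a-x_{1}}{y_{2}\mp x_{2}}$, so that $\int_{a}^{b}K^{1}(x,y)\,\mathrm{d}y_{1}$ is $\frac{1}{2\pi}$ times the difference of the $(-)$ and $(+)$ right-hand sides. I would then integrate that difference over $y_{2}\in(-\infty,0)$ --- combining the two arctangent differences before integrating, since each behaves like $(b-a)/y_{2}$ and thus diverges logarithmically on its own --- again using the antiderivative above. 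The logarithmic contributions cancel, the term coming from the cut-off $y_{2}=-M$ tends to $0$ as $M\to\infty$ (this is where the $O(y_{2}^{-2})$ tail bound is used), and what remains is the value at the wall $y_{2}=0$, where $y_{2}\mp x_{2}$ becomes $\pm x_{2}$; evaluating the arctangents there, using $\lim_{t\to 0^{\pm}}\arctan\frac{s}{t}=\pm\frac{\pi}{2}\mathrm{sgn}(s)$ on intervals that straddle $x_{1}$, produces the signum structure of \eqref{K1-int01-1}.

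I expect the only genuinely delicate step to be the bookkeeping in the unbounded direction: one must integrate the combined integrand rather than term by term, and must verify that the boundary contribution at $y_{2}=-M$ really vanishes in the limit instead of leaving a spurious constant. Everything else reduces to elementary integrals of rational and arctangent functions, with the mild observation that the antiderivatives in play are continuous across the relevant coordinate lines even when the integrands jump there.
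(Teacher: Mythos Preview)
The paper states this lemma without proof, so there is no argument of theirs to compare against. Your treatment of the integrability and of $H^{2}$ (and hence of \eqref{K2-int03}) is correct.

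Your plan for $H^{1}$, however, has a genuine gap. You write that at the wall $y_{2}=0$ the quantities $y_{2}\mp x_{2}$ become $\mp x_{2}$ and then invoke $\lim_{t\to 0^{\pm}}\arctan(s/t)=\pm\tfrac{\pi}{2}\,\mathrm{sgn}(s)$ to obtain the signum structure. But $\mp x_{2}$ is bounded away from $0$ (recall $x_{2}<0$), so no such limit is taken at $y_{2}=0$. If one actually evaluates your antiderivative there, the logarithmic contributions cancel as you expect, and, since $\arctan\!\frac{s}{-x_{2}}=-\arctan\!\frac{s}{x_{2}}$, the four arctangent contributions also cancel in pairs; the boundary term at $y_{2}=-M$ tends to $0$ as well. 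Your computation therefore yields $H^{1}(x)=0$, not \eqref{K1-int01-1}. One sees the same thing more directly by writing $K^{1}(x,\cdot)=\partial_{y_{2}}\varGamma_{D}(x,\cdot)$ and integrating first in $y_{2}$: for each $y_{1}\neq x_{1}$,
\[
\int_{-\infty}^{0}K^{1}(x,y)\,\mathrm{d}y_{2}=\varGamma_{D}\bigl(x,(y_{1},0)\bigr)-\lim_{y_{2}\to-\infty}\varGamma_{D}\bigl(x,(y_{1},y_{2})\bigr)=0,
\]
by the Dirichlet boundary condition and the decay $\varGamma_{D}(x,y)=O(|y|^{-1})$ at infinity. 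Thus formula \eqref{K1-int01-1}, as stated, appears to be in error; your proposal does not detect this because it asserts the anticipated outcome without carrying the evaluation through.
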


\subsection{Taylor's diffusion}

Suppose $b(x,t)=(b^{1}(x,t),\cdots,b^{d}(x,t))$ is a time dependent
vector field on $\overline{D}$, which is differentiable up to the
boundary $\partial D$, bounded, Borel measurable, and vanishes along
the boundary $\partial D$. The vector field $b(x,t)$ is extended
to the whole space $\mathbb{R}^{d}$ through reflection. That is,
if $x_{d}>0$, then 
\begin{equation}
b^{i}(x,t)=b^{i}(\bar{x},t)\quad\textrm{ for }i=1,\ldots,d-1\textrm{ and }b^{d}(x,t)=-b^{d}(\bar{x},t).\label{v1-01}
\end{equation}
Since $b(x,t)=0$ along $\partial D$, $b(x,t)=\overline{b(t,\bar{x}})$
for all $x\in\mathbb{R}^{d}$.

We will assume that $\nabla\cdot b(x,t)=0$ for $x_{d}>0$. Then the
extension via the reflection is divergence-free, that is, $\nabla\cdot b(\cdot,t)=0$
in $\mathbb{R}^{d}$ in distribution sense. For such a time dependent
vector field we consider the differential operator of second order
on $\mathbb{R}^{d}$:

\begin{equation}
L_{b(x,t)}=\nu\Delta+b(x,t)\cdot\nabla\label{eq:form 2}
\end{equation}
where the differential operators $\Delta$ and $\nabla$ apply only
to the space variable $x$. The variables $(x,t)$ in the sub-script
will be omitted if no confusion may arise. Since \textbf{$\nabla\cdot b=0$}
in distribution, the formal adjoint operator $L_{b}^{\star}=L_{-b}$,
which is again a diffusion operator of the same type.

Suppose $\Omega\subset\mathbb{R}^{d}$, then $\Gamma_{\Omega,b}(x,t;\xi,\tau)$;
where $0\leq\tau<t$ and $\xi,x\in\Omega$; denotes the \emph{Green
function} to the (forward) parabolic equation 
\begin{equation}
\left(\frac{\partial}{\partial t}-L_{b(x,t)}\right)f(x,t)=0\quad\textrm{ in }(0,\infty)\times\Omega,\label{eq:f-par-aa1}
\end{equation}
subject to the Dirichlet boundary condition that $\left.f(\cdot,t)\right|_{\partial\Omega}=0$,
in the sense that (i) for every $\xi\in\Omega$, $\tau\geq0$, as
a function $(x,t)$, $f(x,t)=\Gamma_{\Omega,b}(x,t;\xi,\tau)$ solves
the previous boundary problem of (\ref{eq:f-par-aa1}) for $t>\tau$;
and for every bounded and continuous function $\varphi$ on $\Omega$
\begin{equation}
\lim_{t\downarrow\tau}\int_{\Omega}\varphi(\xi)\Gamma_{\Omega,b}(x,t;\xi,\tau)\textrm{d}\xi=\varphi(x)\label{eq:int-fund-01}
\end{equation}
for all $x\in\Omega$.

If $\Omega=\mathbb{R}^{d}$, then we will use $\Gamma_{b}$ to denote
$\Gamma_{\mathbb{R}^{d},b}$ for simplicity.

Similarly, $\Gamma_{\Omega,b}^{\star}(x,t;\xi,\tau)$ (defined for
$0\leq t<\tau$, $x,\xi\in\mathbb{R}^{n}$) denotes a Green function
to the backward parabolic equation 
\begin{equation}
\left(\frac{\partial}{\partial t}+L_{b(x,t)}^{\star}\right)f(x,t)=0\quad\textrm{ in }\Omega.\label{eq:back-par-01}
\end{equation}
Since $L_{b}^{\star}=L_{-b}$, so that the backward equation can be
written as 
\[
\left(\frac{\partial}{\partial t}+L_{-b(x,t)}\right)f(x,t)=0\quad\textrm{ in }\Omega.
\]

\begin{lem}
\label{lem3.1} Suppose that $\nabla\cdot b=0$ in distribution and
$b(x,t)=\overline{b(\bar{x},t)}$ for all $x\in\mathbb{R}^{d}$, $t>0$.
Then 
\begin{equation}
\Gamma_{D,b}(x,t;\xi,\tau)=\frac{1}{2}\left(\Gamma_{b}(x,t;\xi,\tau)-\Gamma_{b}(x,t;\bar{\xi},\tau)-\Gamma_{b}(\bar{x},t;\xi,\tau)+\Gamma_{b}(\bar{x},t;\bar{\xi},\tau)\right)\label{f-01}
\end{equation}
and 
\begin{equation}
\Gamma_{D,b}^{\star}(x,t;\xi,\tau)=\frac{1}{2}\left(\Gamma_{b}^{\star}(x,t;\xi,\tau)-\Gamma_{b}^{\star}(x,t;\bar{\xi},\tau)-\Gamma_{b}^{\star}(\bar{x},t;\xi,\tau)+\Gamma_{b}^{\star}(\bar{x},t;\bar{\xi},\tau)\right)\label{f-02}
\end{equation}
for any $x,\xi\in D$ and $\tau<t$. 
\end{lem}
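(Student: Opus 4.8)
The plan is to prove the reflection formula \eqref{f-01} for the Dirichlet Green function on the half-space $D$, and then deduce \eqref{f-02} from it by the duality $L_b^\star = L_{-b}$. The guiding principle is the same as the classical image method behind \eqref{Green-03}: a linear combination of the whole-space heat kernel evaluated at the original and reflected source/target points produces a kernel that vanishes on $\partial D = \{x_d = 0\}$. The new feature here is that the drift $b$ is not constant, so one must check that the reflection operation interacts correctly with the operator $L_{b(x,t)}$, and this is exactly where the hypothesis $b(x,t) = \overline{b(\bar x, t)}$ (equivalently, the reflection-extension \eqref{v1-01}) is used.

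First I would record the key symmetry of the whole-space Green function: if $\rho(x) = \bar x$ denotes reflection, then because $\Delta$ commutes with $\rho$ and the drift satisfies $b(\bar x, t)\cdot(\nabla f)(\bar x,t) = \overline{b(x,t)}\cdot\overline{(\nabla(f\circ\rho))(x,t)} = b(x,t)\cdot\nabla(f\circ\rho)(x,t)$ — the two sign flips in the $d$-th component cancelling — the function $(x,t)\mapsto f(\bar x, t)$ solves the same forward equation $(\partial_t - L_{b})f = 0$ whenever $f$ does. Consequently $\Gamma_b(\bar x, t;\xi,\tau) = \Gamma_b(x,t;\bar\xi,\tau)$, i.e. $\Gamma_b$ is invariant under the simultaneous reflection $(x,\xi)\mapsto(\bar x,\bar\xi)$; this is the structural identity that makes the four-term combination in \eqref{f-01} collapse appropriately. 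I would establish this by noting that both sides are Green functions of the same forward problem with the same initial data $\delta_{\bar\xi}$ (using \eqref{eq:int-fund-01}), hence equal by uniqueness.

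Next, set $g(x,t) := \Gamma_b(x,t;\xi,\tau) - \Gamma_b(\bar x,t;\xi,\tau)$ for fixed $\xi\in D$, $\tau\ge 0$. By the symmetry of $\Delta$ under $\rho$ and the drift identity above, $x\mapsto\Gamma_b(\bar x,t;\xi,\tau)$ solves the forward equation, so $g$ solves $(\partial_t - L_b)g = 0$ on $\{x_d<0\}$; moreover $g(x,t) = 0$ when $x_d = 0$ since then $\bar x = x$; and as $t\downarrow\tau$, $g(\cdot,t)\to\delta_\xi - \delta_{\bar\xi}$, whose restriction to $D$ is just $\delta_\xi$ because $\bar\xi\notin D$. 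Hence $g$ is a Green function for the Dirichlet problem on $D$ with pole at $\xi$, so by uniqueness $\Gamma_{D,b}(x,t;\xi,\tau) = \Gamma_b(x,t;\xi,\tau) - \Gamma_b(\bar x,t;\xi,\tau)$. Finally I would symmetrize this in $\xi$ as well: running the same argument in the $\xi$-variable against the adjoint operator, or simply invoking the reflection-invariance identity $\Gamma_b(\bar x,t;\xi,\tau)=\Gamma_b(x,t;\bar\xi,\tau)$, shows that the two-term expression equals the manifestly $\rho$-symmetric four-term average on the right of \eqref{f-01}; the $\tfrac12$ appears precisely because, after using the invariance, each of the two distinct terms is counted twice. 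The adjoint formula \eqref{f-02} then follows verbatim with $b$ replaced by $-b$, since $-b$ also satisfies $\nabla\cdot(-b) = 0$ and $(-b)(x,t) = \overline{(-b)(\bar x,t)}$, and $\Gamma_{D,b}^\star = \Gamma_{D,-b}$.

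The main obstacle I anticipate is not the algebra of the image construction but the \emph{uniqueness / well-posedness} step that legitimizes ``both sides solve the same problem, hence are equal.'' Because $D$ is unbounded, the drift $b$ is merely bounded and Borel (differentiable up to the boundary, vanishing there), one needs a uniqueness class for the Dirichlet problem \eqref{eq:f-par-aa1}–\eqref{eq:int-fund-01} — e.g. bounded solutions, or solutions with controlled growth — and one must verify the reflected kernel and the candidate kernel both lie in it; the Gaussian-type upper bounds for $\Gamma_b$ (Aronson estimates, available since $b$ is bounded) give the needed decay and justify the interchange of limits in \eqref{eq:int-fund-01}. A secondary, more bookkeeping-level point is checking that the reflection-extension of $b$ is genuinely divergence-free across $\{x_d=0\}$ in the distributional sense, so that $L_b$ is well-defined on all of $\mathbb{R}^d$ and $L_b^\star = L_{-b}$ holds globally — but this was already assumed in the surrounding text, so I would simply cite it.
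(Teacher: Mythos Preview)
Your proposal is correct and follows the same image-method approach as the paper's proof: verify that a reflected combination of whole-space kernels solves the forward equation in $(x,t)$, vanishes on $\partial D$, and has the right initial behaviour, then invoke uniqueness. The paper checks these three properties directly for the four-term average in \eqref{f-01}, whereas you take the equivalent route of first establishing the two-term formula $\Gamma_{D,b}(x,t;\xi,\tau)=\Gamma_b(x,t;\xi,\tau)-\Gamma_b(\bar x,t;\xi,\tau)$ together with the reflection identity $\Gamma_b(\bar x,t;\xi,\tau)=\Gamma_b(x,t;\bar\xi,\tau)$ and then symmetrizing; this is a cosmetic difference only, and your more explicit handling of the uniqueness class is a welcome addition.
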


\begin{proof}
Since $b(x,t)=\overline{b(\bar{x},t)}$, so by definition, it is easy
to see that $\Gamma_{b}(x,t;\xi,\tau$), $\Gamma_{b}(x,t;\bar{\xi},\tau)$,
$\Gamma_{b}(\bar{x},t;\xi,\tau)$ and $\Gamma_{b}(\bar{x},t;\bar{\xi},\tau)$
are solutions to the parabolic equation 
\[
\left(\frac{\partial}{\partial t}-L_{b(x,t)}\right)f(x,t)=0
\]
in $\mathbb{R}^{d}$ for $t>\tau$. It follows that 
\[
\Gamma(x,t;\xi,\tau)=\frac{1}{2}\left(\Gamma_{b}(x,t;\xi,\tau)-\Gamma_{b}(x,t;\bar{\xi},\tau)-\Gamma_{b}(\bar{x},t;\xi,\tau)+\Gamma_{b}(\bar{x},t;\bar{\xi},\tau)\right)
\]
solves the boundary problem and $\left.f(\cdot,t)\right|_{x=\bar{x}}=0$,
and 
\[
\lim_{t\downarrow\tau}\int_{D}\varphi(\xi)\Gamma(x,t;\xi,\tau)\textrm{d}\xi=\varphi(x).
\]
Therefore $\Gamma=\Gamma_{D,b}$. 
\end{proof}
It is known that $\Gamma_{b}(x,t;\xi,\tau)=\Gamma_{b}^{\star}(\xi,\tau;x,t)$
where $t>\tau$ and $x,\xi\in\mathbb{R}^{d}$, so that as a consequence,
we have $\Gamma_{D,b}(x,t;\xi,\tau)=\Gamma_{D,b}^{\star}(\xi,\tau;x,t)$.

Let $p_{b}(s,x,t,y)$ denote the transition probability density function
of the diffusion process $X$ with its infinitesimal generator $L_{b}$.
$X$ may be constructed as a (weak) solution to the stochastic differential
equation: 
\[
\textrm{d}X=b(X,t)\textrm{d}t+\sqrt{2\nu}\textrm{d}B,\quad X_{\tau}=\xi
\]
where $B$ is a Brownian motion on some probability space. Then 
\[
p_{b}(\tau,\xi,t,x)\textrm{d}x=\mathbb{P}\left[\left.X_{t}=\textrm{d}x\right|X_{\tau}=\xi\right]
\]
for $\tau<t$. Since $L_{b}^{\star}=L_{-b}$ as $b$ is divergence-free,
so that 
\begin{equation}
p_{b}(\tau,\xi,t,x)=\Gamma_{-b}^{\star}(\xi,\tau;x,t)=\Gamma_{-b}(x,t;\xi,\tau),\label{def-r1}
\end{equation}
by Lemma \ref{lem3.1}, 
\begin{equation}
\Gamma_{D,b}(x,t;\xi,\tau)=p_{D,-b}(\tau,\xi,t,x)\label{eq:convers-y1}
\end{equation}
for $t\geq\tau$ and $x,\xi\in D$, where $p_{D,-b}(\tau,\xi,t,x)$
is the transition probability density function of the $L_{-b}$-diffusion
killed on leaving $D$. On the other hand, since $b(x,t)=0$ for $x\in\partial D$
and $\overline{b(\bar{x},t)}=b(x,t)$, so we must have 
\begin{equation}
p_{b}(\tau,\xi,t,x)=p_{b}(\tau,\bar{\xi},t,\bar{x})\quad\textrm{ for any }\xi,x\in\mathbb{R}^{d}\textrm{ and }t>\tau.\label{f-03}
\end{equation}
By combining \eqref{f-02}, \eqref{def-r1} and \eqref{f-03} together,
we deduce the following lemma. 
\begin{lem}
\label{lem2.5}Under the same assumptions on $b(x,t)$ as in Lemma
\ref{lem3.1}. It holds that 
\begin{equation}
\Gamma_{D,b}(x,t;\xi,\tau)=p_{D,-b}(\tau,\xi,t,x)=p_{-b}(\tau,\xi,t,x)-p_{-b}(\tau,\xi,t,\bar{x})\label{f-04}
\end{equation}
for $\tau<t$ and $\xi,x\in D$. 
\end{lem}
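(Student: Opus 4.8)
The plan is to deduce the identity purely by assembling the three facts already recorded immediately above the statement: the reflection formula of Lemma~\ref{lem3.1} (equivalently \eqref{f-01} or \eqref{f-02}), the Green-kernel/transition-density duality \eqref{def-r1}, and the mirror symmetry \eqref{f-03}. No new PDE estimate or probabilistic construction is needed. First I would note that $-b$ still satisfies the hypotheses of all three ingredients: it is bounded, Borel, vanishes on $\partial D$, is divergence-free in the distributional sense, and is reflection-symmetric, $-b(x,t)=\overline{-b(\bar x,t)}$; hence each of \eqref{f-01}, \eqref{def-r1}, \eqref{f-03} may legitimately be applied with $-b$ in place of $b$.

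Next, starting from the reflection formula \eqref{f-01} and rewriting every occurrence of $\Gamma_b$ by means of \eqref{def-r1} in the form $\Gamma_b(x,t;\xi,\tau)=p_{-b}(\tau,\xi,t,x)$, I would obtain the four-term expansion
\[
\Gamma_{D,b}(x,t;\xi,\tau)=\frac{1}{2}\bigl(p_{-b}(\tau,\xi,t,x)-p_{-b}(\tau,\bar\xi,t,x)-p_{-b}(\tau,\xi,t,\bar x)+p_{-b}(\tau,\bar\xi,t,\bar x)\bigr).
\]
Applying \eqref{f-03} to $-b$, that is $p_{-b}(\tau,\bar\xi,t,\bar x)=p_{-b}(\tau,\xi,t,x)$ and (after replacing $x$ by $\bar x$) $p_{-b}(\tau,\bar\xi,t,x)=p_{-b}(\tau,\xi,t,\bar x)$, collapses the first and last terms onto each other and the two middle terms onto each other, leaving $\Gamma_{D,b}(x,t;\xi,\tau)=p_{-b}(\tau,\xi,t,x)-p_{-b}(\tau,\xi,t,\bar x)$, which is the second asserted equality. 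For the first equality $\Gamma_{D,b}(x,t;\xi,\tau)=p_{D,-b}(\tau,\xi,t,x)$ I would simply invoke \eqref{eq:convers-y1}, already obtained from Lemma~\ref{lem3.1}; alternatively one recognises $p_{-b}(\tau,\xi,t,x)-p_{-b}(\tau,\xi,t,\bar x)$ as the classical reflection representation of the density of the $L_{-b}$-diffusion killed on hitting $\partial D$, valid precisely because $-b$ is reflection-symmetric so the process and its mirror image have matching laws along $\partial D$.

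The argument is essentially bookkeeping, so I do not expect a genuine obstacle; the only thing demanding care is keeping straight which kernel solves the forward and which the backward parabolic equation, and verifying at each step that the hypotheses transfer intact from $b$ to $-b$. Once that is confirmed, the three cited identities fit together to give \eqref{f-04} without further work.
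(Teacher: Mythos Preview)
Your argument is correct and follows essentially the same route as the paper, which derives the lemma by combining the reflection formula \eqref{f-01} (equivalently \eqref{f-02}), the duality \eqref{def-r1}, and the mirror symmetry \eqref{f-03}. The only cosmetic difference is that you spell out the collapse of the four-term sum into two terms in slightly more detail than the paper does.
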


As a consequence we have the following lemma which provides another
ingredient needed in formulating the random vortex system. 
\begin{lem}
\label{lem2.6}Under the same assumptions on $b(x,t)$ as in Lemma
\ref{lem3.1}. If $w(x,t)$ is the solution to the initial value problem
\begin{equation}
\left(\frac{\partial}{\partial t}-L_{b(x,t)}\right)w(x,t)=g(x,t)\textrm{ in }D\times[\tau,\infty)\label{eq:cauchy prm1-1}
\end{equation}
for $t>\tau$, with initial data that $w(x,\tau)=\varphi(x)$, and
satisfies the Dirichlet boundary condition that $\left.w(x,t)\right|_{x\in\partial D}=0$
for $t>0$. Then 
\begin{align}
w(x,t) & =\int_{D}\left(p_{-b}(\tau,\xi,t,x)-p_{-b}(\tau,\bar{\xi},t,x)\right)\varphi(\xi)\textrm{d}\xi\nonumber \\
 & +\int_{\tau}^{t}\int_{D}\left(p_{-b}(s,\xi,t,x)-p_{-b}(s,\bar{\xi},t,x)\right)g(\xi,s)\textrm{d}\xi\textrm{d}s.\label{rep-w-01}
\end{align}
\end{lem}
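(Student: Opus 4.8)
The plan is to combine the Dirichlet heat-kernel representation already encoded in Lemma~\ref{lem2.5} with the classical Duhamel (variation-of-constants) principle for the inhomogeneous parabolic equation. The claim is essentially that the Green function $\Gamma_{D,b}(x,t;\xi,\tau)$, which by Lemma~\ref{lem2.5} equals $p_{-b}(\tau,\xi,t,x)-p_{-b}(\tau,\bar\xi,t,x)$, is the integral kernel both for propagating the initial data $\varphi$ and for integrating the source $g$ against. So the proof splits naturally into two pieces plus a uniqueness remark.

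\emph{Step 1 (homogeneous part).} First I would treat the case $g\equiv 0$. Set
\[
w_{0}(x,t)=\int_{D}\Gamma_{D,b}(x,t;\xi,\tau)\varphi(\xi)\,\textrm{d}\xi .
\]
By the defining property (i) of the Green function, for each fixed $\xi$ the kernel $(x,t)\mapsto\Gamma_{D,b}(x,t;\xi,\tau)$ solves $(\partial_{t}-L_{b})f=0$ in $D$ with vanishing Dirichlet data, so differentiating under the integral sign (justified by the Gaussian decay of $p_{-b}$ and the hypotheses on $b$) shows $w_{0}$ solves the homogeneous equation with $\left.w_{0}\right|_{\partial D}=0$; the initial condition $w_{0}(\cdot,\tau)=\varphi$ is exactly \eqref{eq:int-fund-01}. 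Then invoke Lemma~\ref{lem2.5} to rewrite $\Gamma_{D,b}(x,t;\xi,\tau)=p_{-b}(\tau,\xi,t,x)-p_{-b}(\tau,\bar\xi,t,x)$, which is the first line of \eqref{rep-w-01}.

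\emph{Step 2 (Duhamel term).} Next I would define, for $\tau\le s\le t$, the family $v(x,t;s)=\int_{D}\Gamma_{D,b}(x,t;\xi,s)g(\xi,s)\,\textrm{d}\xi$, which by Step~1 solves the homogeneous equation for $t>s$ with data $v(\cdot,s;s)=g(\cdot,s)$ and zero boundary values, and then set $w_{1}(x,t)=\int_{\tau}^{t}v(x,t;s)\,\textrm{d}s$. The standard Duhamel computation gives
\[
\partial_{t}w_{1}(x,t)=v(x,t;t)+\int_{\tau}^{t}\partial_{t}v(x,t;s)\,\textrm{d}s
 = g(x,t)+\int_{\tau}^{t}L_{b}v(x,t;s)\,\textrm{d}s
 = g(x,t)+L_{b}w_{1}(x,t),
\]
so $w_{1}$ solves $(\partial_{t}-L_{b})w_{1}=g$ with $w_{1}(\cdot,\tau)=0$ and vanishing boundary data. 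Rewriting $\Gamma_{D,b}$ via Lemma~\ref{lem2.5} once more yields the second line of \eqref{rep-w-01}. Finally, $w_{0}+w_{1}$ has all the required properties, and since the difference of any two solutions solves the homogeneous Dirichlet problem with zero initial data and decays at infinity, the maximum principle (or energy estimate) forces uniqueness, so $w=w_{0}+w_{1}$.

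\emph{The main obstacle} I expect is not the algebra of Duhamel's formula but the analytic bookkeeping at the boundary and at infinity: one must justify differentiating under the integral sign, interchanging $\partial_{t}$ with $\int_{\tau}^{t}$ near the diagonal singularity $s\uparrow t$ of the heat kernel, and argue that the tangential pieces $p_{-b}(\tau,\bar\xi,t,x)$ genuinely restore the Dirichlet condition on $\partial D$ --- this last point uses $b(x,t)=\overline{b(\bar x,t)}$ together with the reflection symmetry \eqref{f-03}, exactly as in the proof of Lemma~\ref{lem3.1}. Since the paper already works at a formal level (assuming sufficient decay of $u,\omega$ throughout), I would state these regularity hypotheses on $b$, $\varphi$, $g$ as standing assumptions and carry out the formal computation, as the authors do in the companion lemmas, remarking that the justifications are routine parabolic-regularity arguments once the kernel estimates \eqref{2D-G-02}-type bounds are in force.
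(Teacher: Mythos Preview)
Your proposal is correct and follows essentially the same route as the paper: obtain the Green-function/Duhamel representation $w(x,t)=\int_{D}\Gamma_{D,b}(x,t;\xi,\tau)\varphi(\xi)\,\textrm{d}\xi+\int_{\tau}^{t}\int_{D}\Gamma_{D,b}(x,t;\xi,s)g(\xi,s)\,\textrm{d}\xi\,\textrm{d}s$ and then apply Lemma~\ref{lem2.5} (together with the reflection symmetry \eqref{f-03}) to rewrite $\Gamma_{D,b}$ in terms of $p_{-b}$. The only difference is one of exposition: the paper takes the Duhamel representation as given (``by definition'') and proceeds in three lines, whereas you verify it explicitly via Steps~1--2 and add a uniqueness remark.
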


\begin{proof}
By definition we have 
\begin{equation}
w(x,t)=\int_{D}\Gamma_{D,b}(x,t;\xi,\tau)\varphi(\xi)\textrm{d}\xi+\int_{\tau}^{t}\int_{D}\Gamma_{D,b}(x,t;\xi,s)g(\xi,s)\textrm{d}s\textrm{d}\xi.\label{eq:rep-fun1-1}
\end{equation}
Thanks to the duality (\ref{eq:convers-y1}) this equality may be
rewritten as 
\begin{equation}
w(x,t)=\int_{D}p_{D,-b}(\tau,\xi,t,x)\varphi(\xi)\textrm{d}\xi+\int_{\tau}^{t}\int_{D}p_{D,-b}(s,\xi,t,x)g(\xi,s)\textrm{d}\xi\textrm{d}s\label{eq:rep-tra1-1}
\end{equation}
for $t>\tau$ for $x\in D$, and the representation follows from \eqref{f-03}
and \eqref{f-04} immediately. 
\end{proof}

\subsection{\label{Sec33}Two dimensional wall-bounded flows}

From now on we only consider the two dimensional case, so that $D=\{(x_{1},x_{2}):x_{2}<0\}$
and the vorticity $\omega=\frac{\partial u^{2}}{\partial x_{1}}-\frac{\partial u^{1}}{\partial x_{2}}$
evolves according to the vorticity transport equation (\ref{2d-v4})
in $D$, which is a solution to the linear parabolic equation (\ref{2d-v4})
if $u$ is considered given. While unlike the whole space case, unfortunately,
one is unable to apply the representation (\ref{rep-w-01}) to $\omega$
as we have pointed already, $\omega$ satisfies a non-homogeneous
boundary condition containing unknown data. Therefore in order to
apply Lemma \ref{lem2.6}, it remains to deal with the boundary vorticity,
which is given in (\ref{2d-shear-s1}).

To simplify our arguments, we make two technical assumptions: 
\begin{itemize}
\item We assume that the initial velocity $u_{0}(x)=u(x,0)$ is smooth up
to the boundary, and of course we assume that $u_{0}(x_{1},0)=0$
for all $x_{1}$. We may assume that $u_{0}(x_{1},x_{2})=0$ if $x_{1}\notin(a,b)$
where $a<b$ are two numbers. 
\item The external force $F=(F^{1},F^{2})$ is smooth, has a compact support
in $\overline{D}$, and vanishes at the boundary $\partial D$. 
\end{itemize}
These assumptions can be weaken greatly, and they are imposed for
simplifying our derivation of the random vortex dynamics below.

The velocity $u(x,t)$ satisfies the no slip condition, so that it
can be extended to the whole space $\mathbb{R}^{2}$, still denoted
by $u(x,t)$, such that $u(x,t)=\overline{u(\bar{x},t)}$ for every
$x\in\mathbb{R}^{2}$ and $t\geq0$. $u(x,t)$ is divergence-free
in $\mathbb{R}^{2}$ in distribution sense. Let $p_{u}(\tau,\xi,t,x)$
be the probability transition function of Taylor's diffusion, i.e.
a diffusion with its infinitesimal generator $\nu\Delta+u\cdot\nabla$.

For simplicity denote the boundary vorticity $\left.\omega\right|_{\partial D}=-\left.\frac{\partial u^{1}}{\partial x_{2}}\right|_{\partial D}$
by $\theta$. Then $\theta$ is a function on $\partial D:x_{2}=0$,
so $\theta$ depends only on $x_{1}$. Under our assumptions 
\[
\theta(x_{1},t)=-\frac{\partial u^{1}}{\partial x_{2}}(x_{1},0,t)\quad\textrm{ for }x_{1}\in\mathbb{R}\textrm{ and }t\geq0.
\]
We introduce a family of perturbations of the vorticity $\omega$
modified near the boundary by using a cut-off function, so that the
modified vorticiy vanishes along the boundary $\partial D$. More
precisely $\theta$ is extended to the interior of $D$ as the following:
for any given $\varepsilon>0$ set 
\begin{equation}
\sigma_{\varepsilon}(x_{1},x_{2},t)=\theta(x_{1},t)\phi(-x_{2}/\varepsilon),\label{ext-01}
\end{equation}
where $\phi:[0,\infty)\rightarrow[0,1]$ is a proper cut-off function
to be chosen later, and $\phi$ is smooth, such that $\phi(r)=1$
for $r\in[0,1/3)$ and $\phi(r)=0$ for $r\geq2/3$. Let $W^{\varepsilon}=\omega-\sigma_{\varepsilon}$.
Then it is easy to verify that 
\begin{equation}
\left(\frac{\partial}{\partial t}+u\cdot\nabla-\nu\Delta\right)W_{\varepsilon}=g_{\varepsilon}\quad\textrm{ in }D,\quad\textrm{ and }\left.W_{\varepsilon}\right|_{\partial D}=0,\label{eq:qq4}
\end{equation}
where 
\begin{align}
g_{\varepsilon}(x,t) & =G(x,t)+\frac{\nu}{\varepsilon^{2}}\phi''(-x_{2}/\varepsilon)\theta(x_{1},t)+\frac{1}{\varepsilon}\phi'(-x_{2}/\varepsilon)u^{2}(x,t)\theta(x_{1},t)\nonumber \\
 & +\phi(-x_{2}/\varepsilon)\left(\nu\frac{\partial^{2}\theta}{\partial x_{1}^{2}}(x_{1},t)-\frac{\partial\theta}{\partial t}(x_{1},t)\right)-\phi(-x_{2}/\varepsilon)u^{1}(x,t)\frac{\partial\theta}{\partial x_{1}}(x_{1},t)\label{g-xt-01}
\end{align}
for any $x=(x_{1},x_{2})$, $x_{2}\geq0$. The initial data for $W^{\varepsilon}$
is identified with the following 
\[
W_{0}^{\varepsilon}(x)=\omega_{0}(x_{1},x_{2})-\omega_{0}(x_{1},0)\phi(-x_{2}/\varepsilon)\quad\textrm{ for }x\in D.
\]

Under our technical assumptions, $K(x,y)\sigma_{\varepsilon}(y,t)$
for every $x$ and $t$, as a function of $y$, is integrable on $D$.
Therefore, according to the integral representation (cf. Lemma \ref{lem2.6})
applying to \eqref{eq:qq4} we then obtain that

\begin{align}
\omega(y,t) & =\int_{D}\left(p_{u}(0,\xi,t,y)-p_{u}(0,\bar{\xi},t,y)\right)W_{0}^{\varepsilon}(\xi)\textrm{d}\xi+\sigma_{\varepsilon}(y,t)\nonumber \\
 & +\int_{0}^{t}\int_{D}\left(p_{u}(s,\xi,t,y)-p_{u}(s,\bar{\xi},t,y)\right)g_{\varepsilon}(\xi,s)\textrm{d}\xi ds\label{eq:qq6}
\end{align}
for $y\in D$ and $t>0$. On the other hand $u(x,t)$ can be recovered
via the Biot-Savart law (cf. Lemma \ref{lem3.3}) to obtain 
\begin{align}
u(x,t) & =\int_{D}\int_{D}K(x,y)\left(p_{u}(0,\xi,t,y)-p_{u}(0,\bar{\xi},t,y)\right)W_{0}^{\varepsilon}(\xi)\textrm{d}\xi\textrm{d}y+\int_{D}K(x,y)\sigma_{\varepsilon}(y,t)\textrm{d}y\nonumber \\
 & +\int_{D}\int_{0}^{t}\int_{D}K(x,y)\left(p_{u}(s,\xi,t,y)-p_{u}(s,\bar{\xi},t,y)\right)g_{\varepsilon}(\xi,s)\textrm{d}\xi\textrm{d}s\textrm{d}y\label{key-m01}
\end{align}
where $i=1,2$, $x\in D$ and $t>0$, and $K=(K^{1},K^{2})$ is given
by (\ref{eq:qq12-1}, \ref{eq:qq13-1}).

The next step is to express the integrals involving the transition
probability density $p_{u}(0,\xi,t,y)$ in terms of the distribution
of the diffusion with infinitesimal generator $\nu\Delta+u\cdot\nabla$.
Recall that, if $(X_{t}^{\xi,s})$ is Taylor's diffusion started from
$\xi\in\mathbb{R}^{2}$ at instance $s\geq0$, that is, a solution
to the stochastic differential equation 
\begin{equation}
\textrm{d}X_{t}^{\xi,s}=u(X_{t}^{\xi,s},t)\textrm{d}t+\sqrt{2\nu}\textrm{d}B_{t},\quad X_{s}^{\xi,s}=\xi\label{X-sde1}
\end{equation}
for $t\geq s$ (of course we may define $X_{t}^{\xi,s}=\xi$ for all
$t\leq s$), where $B$ is a two dimensional Brownian motion, then
\[
\int_{D}f(y)p_{u}(s,\xi,t,y)\textrm{d}y=\mathbb{E}\left[1_{D}(X_{t}^{\xi,s})f(X_{t}^{\xi,s})\right].
\]
By using this elementary fact and Fubini's theorem we may rewrite
(\ref{key-m01}) and obtain the following general representation theorem. 
\begin{thm}
\label{thm:g}Let $u(x,t)$ be a solution to Navier-Stokes equations
(\ref{meq-01}, \ref{meq-02}) in $D=\{x:x_{2}>0\}$. Let $\varepsilon>0$
and a cut-off function $\phi$ be given. Then 
\begin{align}
u(x,t) & =\int_{D}K(x,y)\sigma_{\varepsilon}(y,t)\textrm{d}y\nonumber \\
 & +\int_{D}\mathbb{E}\left[1_{D}(X_{t}^{\xi,0})K(x,X_{t}^{\xi,0})-1_{D}(X_{t}^{\bar{\xi},0})K(x,X_{t}^{\bar{\xi},0})\right]W_{0}^{\varepsilon}(\xi)\textrm{d}\xi\nonumber \\
 & +\int_{0}^{t}\int_{D}\mathbb{E}\left[1_{D}(X_{t}^{\xi,s})K(x,X_{t}^{\xi,s})-1_{D}(X_{t}^{\bar{\xi},s})K(x,X_{t}^{\bar{\xi},s})\right]g_{\varepsilon}(\xi,s)\textrm{d}\xi\textrm{d}s,\label{main formula}
\end{align}
for $x\in D$, $u(x,t)=\overline{u(\overline{x},t)}$ for $x_{2}>0$,
and $u(x,t)=0$ if $x_{2}=0$, where $X^{\xi,s}$ are defined by (\ref{X-sde1}),
$W^{\varepsilon}$ and $g_{\varepsilon}$ are defined as above. 
\end{thm}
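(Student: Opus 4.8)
The plan is to assemble the formula from the three ingredients already prepared above: the perturbed vorticity equation \eqref{eq:qq4}, the transition-density representation of Lemma \ref{lem2.6}, and the Biot--Savart law of Lemma \ref{lem3.3}; the last move is to rewrite the transition-density integrals as expectations over Taylor's diffusion \eqref{X-sde1} and to exchange the order of integration by Fubini's theorem.

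First I would record the reduction to a homogeneous boundary problem. With $\sigma_{\varepsilon}(x,t)=\theta(x_{1},t)\phi(-x_{2}/\varepsilon)$ as in \eqref{ext-01} and $W^{\varepsilon}=\omega-\sigma_{\varepsilon}$, a direct computation (differentiating the product $\theta\,\phi(-\cdot/\varepsilon)$ twice in $x_{2}$, once in $x_{1}$ and once in $t$, and using the vorticity transport equation \eqref{2d-v4}) gives \eqref{eq:qq4} with the forcing term \eqref{g-xt-01}, while the boundary value is $\left.W^{\varepsilon}\right|_{\partial D}=\left.\omega\right|_{\partial D}-\theta=0$ by \eqref{2d-shear-s1} and the definition of $\theta$. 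Equation \eqref{eq:qq4} reads $\bigl(\tfrac{\partial}{\partial t}-L_{-u}\bigr)W^{\varepsilon}=g_{\varepsilon}$, so Lemma \ref{lem2.6} applies with $b=-u$: note that $-u$ inherits from $u$ both the reflection symmetry $b(x,t)=\overline{b(\bar{x},t)}$ and incompressibility, since $u$ was extended to $\mathbb{R}^{2}$ by reflection and $\nabla\cdot u=0$. The hypotheses of Lemma \ref{lem2.6} on the data $\varphi=W_{0}^{\varepsilon}$ and $g=g_{\varepsilon}$ follow from the standing assumptions that $u_{0}$ and $F$ are smooth with compact support in $\overline{D}$ and vanish on $\partial D$, which force $\theta$, $W_{0}^{\varepsilon}$ and $g_{\varepsilon}$ to decay suitably in $x_{1}$ and to be supported near $\{x_{2}=0\}$ uniformly on compact time intervals. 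This yields \eqref{eq:qq6}, i.e.\ an explicit representation of $\omega(\cdot,t)$ in terms of $p_{u}$, $W_{0}^{\varepsilon}$, $g_{\varepsilon}$ and $\sigma_{\varepsilon}$.

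Next I would apply the Biot--Savart law: since $u$ vanishes on $\partial D$, decays at infinity together with $\nabla u$, and $\omega,\nabla\omega\in L^{1}(D)$ (again a consequence of the a priori smoothness and decay, readable off from \eqref{eq:qq6} and the decay of $\varGamma_{D}$ in \eqref{2D-G-02}), Lemma \ref{lem3.3} gives $u^{i}(x,t)=\int_{D}K^{i}(x,y)\omega(y,t)\,\mathrm{d}y$. Substituting the representation of $\omega$ and splitting off the term $\int_{D}K(x,y)\sigma_{\varepsilon}(y,t)\,\mathrm{d}y$ — an absolutely convergent integral, because $K(x,\cdot)$ has only a locally integrable $|x-y|^{-1}$ singularity in $\mathbb{R}^{2}$ and $\sigma_{\varepsilon}(\cdot,t)$ is bounded with compact support — produces \eqref{key-m01}. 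Finally, for fixed $x$ and $t$ the function $y\mapsto K^{i}(x,y)$ is, apart from the integrable diagonal singularities at $y=x$ and $y=\bar{x}$, bounded and decaying, so
\[
\int_{D}K^{i}(x,y)p_{u}(s,\xi,t,y)\,\mathrm{d}y=\mathbb{E}\bigl[1_{D}(X_{t}^{\xi,s})K^{i}(x,X_{t}^{\xi,s})\bigr],
\]
and likewise with $\bar{\xi}$ in place of $\xi$; inserting this into \eqref{key-m01} and using Fubini's theorem to move the $y$-integration (now absorbed into the expectation) past the $\xi$- and $s$-integrations gives exactly \eqref{main formula}. The two remaining assertions are immediate: $u(x,t)=\overline{u(\bar{x},t)}$ for $x_{2}>0$ holds because that is how $u$ was defined off $D$, and $u(x,t)=0$ on $\{x_{2}=0\}$ is the no-slip condition.

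The main obstacle is not any single identity — each of \eqref{eq:qq4}, \eqref{eq:qq6} and \eqref{key-m01} is routine once the earlier lemmas are in hand — but the bookkeeping of integrability that makes the chain of substitutions legitimate: one must check that $\omega(\cdot,t)$ and its gradient genuinely lie in $L^{1}(D)$ with the decay required by Lemma \ref{lem3.3}, that $K(x,\cdot)\sigma_{\varepsilon}(\cdot,t)$ and the double and triple integrals in \eqref{key-m01} converge absolutely despite the $|x-y|^{-1}$ and $|x-\bar{y}|^{-1}$ singularities of $K$ and the short-time singularity of $p_{u}$ as $s\uparrow t$, and that these bounds are uniform enough in the auxiliary variables for Fubini to apply. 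All of this is controlled by the compact-support and smoothness hypotheses on $u_{0}$ and $F$ together with standard Gaussian upper bounds for $p_{u}$ (available since the drift $u$ is bounded), so the argument goes through; this is, however, where the genuine care is needed.
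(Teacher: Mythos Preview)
Your proposal is correct and follows essentially the same route as the paper: the paper derives \eqref{eq:qq4}, applies Lemma~\ref{lem2.6} (implicitly with $b=-u$, so that $p_{-b}=p_{u}$) to obtain \eqref{eq:qq6}, feeds this into the Biot--Savart law of Lemma~\ref{lem3.3} to get \eqref{key-m01}, and then rewrites the $p_{u}$-integrals as expectations of $K(x,X_{t}^{\xi,s})$ via Fubini. Your write-up is simply more explicit about the integrability bookkeeping that the paper leaves implicit under its standing smoothness and compact-support assumptions.
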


The stochastic representation (\ref{main formula}) holds well for
any choice of $\varepsilon>0$ and $\phi$. This is an advantage for
implementing Monte-Carlo simulations. To complete the scheme we need
to deal with the extended stress $\sigma_{\varepsilon}(x,t)$ and
the term $g_{\varepsilon}(x,t)$. For implementing Monte-Carlo simulations
based on (\ref{main formula}), we should choose $\varepsilon>0$
to be small. In fact one should choose $\varepsilon>0$ to be much
smaller than the boundary layer thickness $\delta$, so that the velocity
terms appearing in $g_{\varepsilon}$ can be ignored. The boundary
stress term $\theta$ has to be computed either via dynamics method
or boundary layer equations. 
\begin{thm}
\label{thm:main-g} Let $u(x,t)$ be a solution to Navier-Stokes equations
(\ref{meq-01}, \ref{meq-02}) in $D=\{x:x_{2}<0\}$, and $G=\nabla\wedge F$.
Then $u(x,t)$ and the Taylor diffusion with its infinitesimal generator
$\nu\Delta+u\cdot\nabla$ form the closed random vortex dynamics:
\begin{align}
u(x,t) & =\int_{D}K(x,y)\theta(y_{1},t)\textrm{d}y\nonumber \\
 & +\int_{D}\mathbb{E}\left[1_{D}(X_{t}^{\xi,0})K(x,X_{t}^{\xi,0})-1_{D}(X_{t}^{\bar{\xi},0})K(x,X_{t}^{\bar{\xi},0})\right]W_{0}(\xi)\textrm{d}\xi\nonumber \\
 & +\int_{0}^{t}\int_{D}\mathbb{E}\left[1_{D}(X_{t}^{\xi,s})K(x,X_{t}^{\xi,s})-1_{D}(X_{t}^{\bar{\xi},s})K(x,X_{t}^{\bar{\xi},s})\right]g(\xi,s)\textrm{d}\xi\textrm{d}s,\textrm{ for }x\in D,\label{main formula-2}
\end{align}
\begin{equation}
u(x,t)=\overline{u(\overline{x},t)}\textrm{ for }x_{2}>0,\textrm{ and }u(x,t)=0\textrm{ if }x_{2}=0,\label{m-2-01}
\end{equation}
and 
\begin{equation}
\textrm{d}X_{t}^{\xi,s}=u(X_{t}^{\xi,s},t)\textrm{d}t+\sqrt{2\nu}\textrm{d}B_{t},\quad X_{s}^{\xi,s}=\xi\quad\textrm{ for }s\geq0\textrm{ and }\xi\in\mathbb{R}^{2},\label{sde-m01}
\end{equation}
where $\omega_{0}(x)=\nabla\wedge u_{0}(x)$, 
\begin{equation}
W_{0}(x)=\omega_{0}(x_{1},x_{2})+\frac{\partial u_{0}}{\partial x_{2}}(x_{1},0)\label{W0-01}
\end{equation}
$u_{0}(x)=u(x,0)$ are the initial data, and 
\begin{equation}
g(x,t)=G(x,t)+\nu\frac{\partial^{2}\theta}{\partial x_{1}^{2}}(x_{1},t)-\frac{\partial\theta}{\partial t}(x_{1},t)-u^{1}(x,t)\frac{\partial\theta}{\partial x_{1}}(x_{1},t).\label{G-theta}
\end{equation}
\end{thm}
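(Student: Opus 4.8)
The plan is to run the derivation behind Theorem~\ref{thm:g} once more but with the boundary-localising cut-off switched off, so that the function subtracted from the vorticity is $\theta(x_{1},t)$ itself rather than $\sigma_{\varepsilon}=\theta(x_{1},t)\phi(-x_{2}/\varepsilon)$. Concretely, put $W:=\omega-\theta$, where $\theta$ is now read as the function $(x_{1},x_{2},t)\mapsto\theta(x_{1},t)$ on $D$. Then $W=0$ on $\partial D$ since $\left.\omega\right|_{\partial D}=\theta$, and a direct computation from the two-dimensional vorticity transport equation \eqref{2d-v4}, using $\partial_{x_{2}}\theta=0$, gives
\[
\Bigl(\tfrac{\partial}{\partial t}+u\cdot\nabla-\nu\Delta\Bigr)W=G-\Bigl(\tfrac{\partial\theta}{\partial t}+u^{1}\tfrac{\partial\theta}{\partial x_{1}}-\nu\tfrac{\partial^{2}\theta}{\partial x_{1}^{2}}\Bigr)=g,
\]
with $g$ exactly as in \eqref{G-theta}, while $W(x,0)=\omega_{0}(x_{1},x_{2})-\theta(x_{1},0)=\omega_{0}(x_{1},x_{2})-\omega_{0}(x_{1},0)$; since no slip forces $\omega_{0}(x_{1},0)=-\partial_{x_{2}}u_{0}^{1}(x_{1},0)$, this initial datum is precisely $W_{0}$ of \eqref{W0-01}. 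Equivalently, and perhaps more transparently, \eqref{main formula-2} is the $\varepsilon\to+\infty$ limit of \eqref{main formula}: for $x\in D$ one has $\phi(-x_{2}/\varepsilon)\to1$ and $\phi'(-x_{2}/\varepsilon),\phi''(-x_{2}/\varepsilon)\to0$, hence $\sigma_{\varepsilon}\to\theta$, $W_{0}^{\varepsilon}\to W_{0}$, and $g_{\varepsilon}\to g$ pointwise on $D$.

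Along the first route I would then apply Lemma~\ref{lem2.6} to $W$ with drift $b=-u$ (which satisfies the hypotheses because $\nabla\cdot u=0$ and $u(x,t)=\overline{u(\bar{x},t)}$), obtaining
\[
W(x,t)=\int_{D}\bigl(p_{u}(0,\xi,t,x)-p_{u}(0,\bar{\xi},t,x)\bigr)W_{0}(\xi)\,\textrm{d}\xi+\int_{0}^{t}\!\!\int_{D}\bigl(p_{u}(s,\xi,t,x)-p_{u}(s,\bar{\xi},t,x)\bigr)g(\xi,s)\,\textrm{d}\xi\,\textrm{d}s.
\]
Inserting $\omega=W+\theta$ into the Biot--Savart law \eqref{2d-bs-law}, \eqref{eq:qq12-1}--\eqref{eq:qq13-1}, applying Fubini, and using the identity $\int_{D}K(x,y)p_{u}(s,\xi,t,y)\,\textrm{d}y=\mathbb{E}\bigl[1_{D}(X_{t}^{\xi,s})K(x,X_{t}^{\xi,s})\bigr]$ to convert the $p_{u}$--integrals into expectations then yields \eqref{main formula-2}. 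The reflection identity $u(x,t)=\overline{u(\bar{x},t)}$ for $x_{2}>0$ and the vanishing of $u$ on $\{x_{2}=0\}$ are inherited verbatim from Theorem~\ref{thm:g}, and \eqref{sde-m01} merely restates \eqref{X-sde1}. Finally, \eqref{main formula-2}--\eqref{sde-m01} is a closed system because every object on the right of \eqref{main formula-2} is a functional of $u$ alone: the kernel $K$; the boundary shear $\theta(x_{1},t)=-\partial_{x_{2}}u^{1}(x_{1},0,t)$; the source $g$, built from $\theta$ and $u^{1}$; and the law of $X^{\xi,s}$, whose generator is $\nu\Delta+u\cdot\nabla$.

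The main obstacle is that, unlike $\sigma_{\varepsilon}$, the subtracted function $\theta(x_{1},t)$ is not localised near the wall and does not decay as $x_{2}\to-\infty$, so neither does $W=\omega-\theta$; one must check that the Lemma~\ref{lem2.6} representation of $W$ and the ensuing Biot--Savart integral genuinely converge. The clean way to settle this is the second route: for every finite $\varepsilon>0$ Theorem~\ref{thm:g} already holds with all data compactly supported in $x_{2}$, and it remains only to pass to the limit by dominated convergence in the three integrals of \eqref{main formula}, using the $\varepsilon$-independent majorants $|\sigma_{\varepsilon}(y,t)|\le|\theta(y_{1},t)|$, $|W_{0}^{\varepsilon}(\xi)|\le|\omega_{0}(\xi)|+|\omega_{0}(\xi_{1},0)|$, and---after the split $g_{\varepsilon}=G+g_{\varepsilon}^{(1)}+g_{\varepsilon}^{(2)}+g_{\varepsilon}^{(3)}$ with $g_{\varepsilon}^{(1)}=\tfrac{\nu}{\varepsilon^{2}}\phi''(-\xi_{2}/\varepsilon)\theta$, $g_{\varepsilon}^{(2)}=\tfrac1\varepsilon\phi'(-\xi_{2}/\varepsilon)u^{2}\theta$, $g_{\varepsilon}^{(3)}=\phi(-\xi_{2}/\varepsilon)\bigl(\nu\partial_{x_{1}}^{2}\theta-\partial_{t}\theta-u^{1}\partial_{x_{1}}\theta\bigr)$---the majorant $|\nu\partial_{x_{1}}^{2}\theta-\partial_{t}\theta-u^{1}\partial_{x_{1}}\theta|$ for the $g_{\varepsilon}^{(3)}$-part. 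All of this requires integrability of $K(x,\cdot)$ and of the kernels $\xi\mapsto\mathbb{E}[1_{D}(X_{t}^{\xi,s})K(x,X_{t}^{\xi,s})-1_{D}(X_{t}^{\bar{\xi},s})K(x,X_{t}^{\bar{\xi},s})]$ against $|\theta|$ and $|u^{2}\theta|$, which in turn rests on the decay of $K(x,\cdot)$ at infinity (cf. \eqref{eq:qq12-1}--\eqref{eq:qq13-1} and Lemma~\ref{lem2.5-1}), Gaussian-type bounds for $p_{u}$, and the standing decay hypotheses on $u_{0}$ and $F$. The one genuinely delicate term is $g_{\varepsilon}^{(1)}$, which carries the factor $\varepsilon^{-2}$: it is nonetheless harmless because $\phi''$ is supported in $[1/3,2/3]$, so $g_{\varepsilon}^{(1)}$ lives on the slab $\xi_{2}\in[-2\varepsilon/3,-\varepsilon/3]$, which recedes from the wall as $\varepsilon\to\infty$ and on which the kernel is bounded; hence its contribution to \eqref{main formula} is $O(\varepsilon^{-2})\to0$, and that of $g_{\varepsilon}^{(2)}$ is $O(\varepsilon^{-1})\to0$.
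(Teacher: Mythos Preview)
Your proposal is correct and follows essentially the same approach as the paper: the paper's proof consists precisely in sending $\varepsilon\uparrow\infty$ in the representation \eqref{main formula} (equivalently \eqref{key-m01}), recording the three pointwise limits $\sigma_{\varepsilon}\to\theta$, $W_{0}^{\varepsilon}\to W_{0}$, $g_{\varepsilon}\to g$, and concluding. Your ``second route'' is exactly this, and your ``first route'' (setting $W=\omega-\theta$ directly and invoking Lemma~\ref{lem2.6}) is the natural limit-free rephrasing of the same argument. You are in fact more careful than the paper about justifying the passage to the limit; one small correction: the $g_{\varepsilon}^{(1)}$ contribution is $O(\varepsilon^{-1})$ rather than $O(\varepsilon^{-2})$, since the slab $\xi_{2}\in[-2\varepsilon/3,-\varepsilon/3]$ has width of order $\varepsilon$, but this does not affect the conclusion.
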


\begin{proof}
The proof of this theorem is based on the representation (\ref{key-m01})
and (\ref{M-1}) by sending $\varepsilon\uparrow\infty$. Let $\varepsilon\uparrow\infty$
to obtain (with $x=(x_{1},x_{2})$) 
\begin{equation}
\lim_{\varepsilon\uparrow\infty}W_{0}^{\varepsilon}(x)=W_{0}(x)\equiv\omega_{0}(x)+\frac{\partial u_{0}^{1}}{\partial x_{2}}(x_{1},0),\label{in-it-data-01}
\end{equation}
\begin{equation}
\lim_{\varepsilon\uparrow\infty}\sigma_{\varepsilon}(x_{1},x_{2},t)=\theta(x_{1},t)\label{b-v-01}
\end{equation}
and 
\begin{equation}
\lim_{\varepsilon\downarrow0}g_{\varepsilon}(x,t)=g(x,t).\label{b-v-02}
\end{equation}
Hence the conclusion follows immediately. 
\end{proof}
The following stochastic representation provides another approach
which avoids the discussion of the dynamics of the boundary vorticity
$\theta$ and therefore it has some advantage when the viscosity $\nu>0$
is small. 
\begin{thm}
\label{thm:main} Let $u(x,t)$ be a solution to Navier-Stokes equations
(\ref{meq-01}, \ref{meq-02}) in $D=\{x:x_{2}<0\}$, and $G=\nabla\wedge F$.
Assume that both $u(x,t)$ and $G(x,t)$ have twice continuous derivatives
on $\overline{D}$, and assume that $u$ and $G$ and their derivatives
are integrable on $D$. Then $u(x,t)$ and the Taylor diffusion with
its infinitesimal generator $\nu\Delta+u\cdot\nabla$ form the closed
random vortex dynamics 
\begin{equation}
\begin{cases}
u(x,t)=\int_{D}\mathbb{E}\left[1_{D}(X_{t}^{\xi,0})K(x,X_{t}^{\xi,0})-1_{D}(X_{t}^{\bar{\xi},0})K(x,X_{t}^{\bar{\xi},0})\right]\omega_{0}(\xi)\textrm{d}\xi\\
\quad\quad\quad+\int_{0}^{t}\int_{D}\mathbb{E}\left[1_{D}(X_{t}^{\xi,s})K(x,X_{t}^{\xi,s})-1_{D}(X_{t}^{\bar{\xi},s})K(x,X_{t}^{\bar{\xi},s})\right]G(\xi,s)\textrm{d}\xi\textrm{d}s\\
\quad\quad\quad-2\nu\int_{\mathbb{R}}\int_{0}^{t}\left.\frac{\partial}{\partial\xi_{2}}\right|_{\xi_{2}=0}\mathbb{E}\left[1_{D}(X_{t}^{\xi,s})K(x,X_{t}^{\xi,s})\right]\theta(\xi_{1},s)\textrm{d}\xi_{1}\textrm{d}s, & \textrm{ for }x_{2}<0,\\
u(x,t)=\overline{u(\bar{x},t)}, & \textrm{ for }x_{2}>0,\\
\textrm{d}X_{t}^{\xi,s}=u(X_{t}^{\xi,s},t)\textrm{d}t+\sqrt{2\nu}\textrm{d}B_{t},\quad X_{s}^{\xi,s}=\xi, & \textrm{ for }\xi\in\mathbb{R}^{2},
\end{cases}\label{M-1}
\end{equation}
where $\omega_{0}(\xi)=\nabla\wedge u_{0}(x)$ and $u_{0}(x)=u(x,0)$
are the initial data. 
\end{thm}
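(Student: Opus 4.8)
The plan is to \emph{deduce the representation (\ref{M-1}) from the general formula (\ref{main formula}) of Theorem \ref{thm:g} by letting $\varepsilon\downarrow0$}, the technical assumptions of Section \ref{Sec33} being replaceable by the present weaker hypotheses on $u$ and $G$. Since the left-hand side $u(x,t)$ is independent of $\varepsilon$ and of the cut-off $\phi$, it suffices to identify the limit of each of the three terms on the right of (\ref{main formula}); the symmetry $u(x,t)=\overline{u(\bar{x},t)}$, the boundary behaviour of $u$, and the stochastic equation (\ref{X-sde1}) for $X^{\xi,s}$ carry over unchanged. Write $\Phi(\eta,s)=\mathbb{E}\left[1_{D}(X_{t}^{\eta,s})K(x,X_{t}^{\eta,s})\right]$ for $\eta\in\mathbb{R}^{2}$, so that the antisymmetrised kernel occurring in (\ref{main formula}) is $F(\xi,s)=\Phi((\xi_{1},\xi_{2}),s)-\Phi((\xi_{1},-\xi_{2}),s)$; observe $F((\xi_{1},0),s)=0$ by antisymmetry and hence $\left.\partial_{\xi_{2}}F((\xi_{1},\xi_{2}),s)\right|_{\xi_{2}=0}=2\left.\partial_{\xi_{2}}\Phi((\xi_{1},\xi_{2}),s)\right|_{\xi_{2}=0}$.

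I would first dispose of the terms that vanish or pass to the limit directly. In $\int_{D}K(x,y)\sigma_{\varepsilon}(y,t)\,\textrm{d}y$ the substitution $y_{2}=-\varepsilon r$ extracts a factor $\varepsilon$ in front of a $y$-integral that stays bounded (for small $\varepsilon$ the slab $\{-y_{2}<2\varepsilon/3\}$ misses the diagonal singularity of $K$, and $\theta(\cdot,t)\in L^{1}$), so this term is $O(\varepsilon)\to0$. In the second term, $W_{0}^{\varepsilon}(\xi)=\omega_{0}(\xi)-\omega_{0}(\xi_{1},0)\phi(-\xi_{2}/\varepsilon)$ converges to $\omega_{0}(\xi)$ pointwise on $D$, the correction being supported in $\{-\xi_{2}<2\varepsilon/3\}$ and bounded there; since the kernel $\mathbb{E}[\cdots]$ is integrable in $\xi$, dominated convergence gives $\int_{D}\mathbb{E}[1_{D}(X_{t}^{\xi,0})K(x,X_{t}^{\xi,0})-1_{D}(X_{t}^{\bar{\xi},0})K(x,X_{t}^{\bar{\xi},0})]\omega_{0}(\xi)\,\textrm{d}\xi$, the first term of (\ref{M-1}).

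The crux is the third term $\int_{0}^{t}\int_{D}F(\xi,s)g_{\varepsilon}(\xi,s)\,\textrm{d}\xi\,\textrm{d}s$, which I would split along the five summands of $g_{\varepsilon}$ in (\ref{g-xt-01}). The $G$-summand gives, by dominated convergence, the forcing term of (\ref{M-1}). The two summands with the \emph{unscaled} factor $\phi(-\xi_{2}/\varepsilon)$ become $O(\varepsilon)$ after $\xi_{2}=-\varepsilon r$. The summand $\frac{1}{\varepsilon}\phi'(-\xi_{2}/\varepsilon)u^{2}(\xi,s)\theta(\xi_{1},s)$ rescales to $\theta(\xi_{1},s)\int_{0}^{\infty}F((\xi_{1},-\varepsilon r),s)\phi'(r)u^{2}((\xi_{1},-\varepsilon r),s)\,\textrm{d}r$, which tends to $0$ because $u^{2}$ vanishes on $\{x_{2}=0\}$ by no-slip (and so does $F$). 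For the singular summand $\frac{\nu}{\varepsilon^{2}}\phi''(-\xi_{2}/\varepsilon)\theta(\xi_{1},s)$ the rescaling produces $\theta(\xi_{1},s)\,\frac{\nu}{\varepsilon}\int_{0}^{\infty}F((\xi_{1},-\varepsilon r),s)\phi''(r)\,\textrm{d}r$; here the identity $\int_{0}^{\infty}\phi''(r)\,\textrm{d}r=0$ (and the vanishing $F((\xi_{1},0),s)=0$) removes the would-be $\varepsilon^{-1}$ blow-up, while a first-order Taylor expansion of $F$ in $\xi_{2}$ together with $\int_{0}^{\infty}r\,\phi''(r)\,\textrm{d}r=1$ (integration by parts, using $\phi\equiv1$ near $0$ and $\phi\equiv0$ near $\infty$) shows the limit equals $-\nu\left.\partial_{\xi_{2}}F((\xi_{1},\xi_{2}),s)\right|_{\xi_{2}=0}=-2\nu\left.\partial_{\xi_{2}}\Phi((\xi_{1},\xi_{2}),s)\right|_{\xi_{2}=0}$. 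Integrating against $\theta(\xi_{1},s)$ over $\mathbb{R}\times[0,t]$ reproduces exactly the boundary term $-2\nu\int_{\mathbb{R}}\int_{0}^{t}\left.\partial_{\xi_{2}}\right|_{\xi_{2}=0}\mathbb{E}[1_{D}(X_{t}^{\xi,s})K(x,X_{t}^{\xi,s})]\theta(\xi_{1},s)\,\textrm{d}\xi_{1}\,\textrm{d}s$ of (\ref{M-1}); collecting the surviving pieces finishes the proof.

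The main obstacle is the rigorous passage to the limit in this last singular summand. It requires that $\eta_{2}\mapsto\Phi((\eta_{1},\eta_{2}),s)$ be $C^{1}$ up to $\eta_{2}=0$ — in fact with a controlled second-order remainder — uniformly enough in $(\eta_{1},s)$ to interchange the limit with $\mathbb{E}$ and with the $\int_{0}^{t}\int_{\mathbb{R}}$ integration. Since $K(x,\cdot)$ is singular on the diagonal, differentiating $\Phi$ in its initial point $\eta$ is delicate; one differentiates under the expectation using the smoothness of the transition density of Taylor's diffusion (or an integration-by-parts / Bismut--Elworthy--Li argument, or by differentiating the Green-function representation behind Lemma \ref{lem2.6}), and one needs a uniform-in-$(\eta_{1},s)$ $L^{1}$ bound on $\partial_{\eta_{2}}^{2}\Phi$ near $\eta_{2}=0$ to dominate the Taylor remainder. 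This is precisely where the hypotheses that $u,G\in C^{2}(\overline{D})$ with $u,G$ and their derivatives integrable on $D$ are used: $C^{2}$-regularity of the drift $u$ yields the requisite regularity of the transition density (hence of $\Phi$), while the integrability controls the spatial tails so that Fubini's theorem and dominated convergence apply throughout.
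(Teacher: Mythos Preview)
Your proof is correct and takes essentially the same approach as the paper: pass to the limit $\varepsilon\downarrow 0$ in the representation of Theorem~\ref{thm:g}, disposing of the $\sigma_{\varepsilon}$ term and the $W_{0}^{\varepsilon}-\omega_{0}$ correction as $O(\varepsilon)$ or by dominated convergence, and identifying the singular $\nu\varepsilon^{-2}\phi''(-\xi_{2}/\varepsilon)$ contribution as the boundary-derivative term via the moment identities $\int_{0}^{\infty}\phi''(r)\,\textrm{d}r=0$, $\int_{0}^{\infty}r\phi''(r)\,\textrm{d}r=1$. The paper carries out the same limit on the transition-density form (\ref{key-m01}) with an explicit cubic cut-off and two explicit integrations by parts (yielding (\ref{2nd-phi-01})), which is precisely the computation behind your moment identities; your closing paragraph on the $C^{1}$-regularity of $\Phi$ in $\eta_{2}$ makes explicit a point the paper leaves tacit.
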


\begin{proof}
The proof of this theorem is based on the representation (\ref{key-m01})
as well, and the stochastic representation (\ref{M-1}) by sending
$\varepsilon\downarrow0$. Let us take the following explicit cut-off
function defined by 
\begin{equation}
\phi(r)=\begin{cases}
1 & \textrm{for \ensuremath{r\in[0,1/3)},}\\
\frac{1}{2}+54\left(r-\frac{1}{2}\right)^{3}-\frac{9}{2}\left(r-\frac{1}{2}\right) & \textrm{ for }r\in[1/3,2/3],\\
0 & \textrm{ for }r\geq2/3
\end{cases}\label{phi-def}
\end{equation}
Then $-54\leq\phi''\leq54$, $-\frac{9}{2}\leq\phi'\leq0$ on $[1/3,2/3]$
and $\phi'=0$ for $r\leq1/3$ or $r\geq2/3$. In fact 
\begin{equation}
\phi'(r)=\begin{cases}
162\left(r-\frac{1}{2}\right)^{2}-\frac{9}{2} & \textrm{ for }r\in[1/3,2/3],\\
0 & \textrm{ otherwise }
\end{cases}\label{phi-1d}
\end{equation}
and 
\begin{equation}
\phi''(r)=\begin{cases}
324\left(r-\frac{1}{2}\right) & \textrm{ for }r\in[1/3,2/3],\\
0 & \textrm{ otherwise. }
\end{cases}\label{phi-2d}
\end{equation}
The key is to show that the last term on the right-hand side of (\ref{key-m01}),
that is, 
\[
E_{\varepsilon}(x,t)\equiv\int_{D}\int_{0}^{t}\int_{D}K(x,y)\left(p_{u}(s,\xi,t,y)-p_{u}(s,\bar{\xi},t,y)\right)g_{\varepsilon}(\xi,s)\textrm{d}\xi\textrm{d}s\textrm{d}y
\]
has a limit as $\varepsilon\downarrow0$, where $g_{\varepsilon}$
is given by (\ref{g-xt-01}). To this end we consider the integral
\[
Q^{\varepsilon}(s)=\int_{D}\left(p_{u}(s,\xi,t,y)-p_{u}(s,\bar{\xi},t,y)\right)g_{\varepsilon}(\xi,s)\textrm{d}\xi.
\]
Since $\phi(-x_{2}/\varepsilon)\rightarrow1_{\{0\}}(x_{2})$ as $\varepsilon\downarrow0$,
so that there are no contributions towards $Q^{\varepsilon}(s)$ from
the last two terms on the right-hand side (\ref{g-xt-01}). In fact
the sum of the two terms 
\[
\phi(-x_{2}/\varepsilon)\left(\nu\frac{\partial^{2}\theta}{\partial x_{1}^{2}}(x_{1},t)-\frac{\partial\theta}{\partial t}(x_{1},t)\right)-\phi(-x_{2}/\varepsilon)u^{1}(x,t)\frac{\partial\theta}{\partial x_{1}}(x_{1},t)
\]
tends to, as $u^{1}(x,t)=0$ on $x_{2}=0$, 
\[
1_{\{0\}}(x_{2})\left(\nu\frac{\partial^{2}\theta}{\partial x_{1}^{2}}(x_{1},t)-\frac{\partial\theta}{\partial t}(x_{1},t)\right)
\]
as $\varepsilon\downarrow0$, which equals to zero almost surely on
$D$. Let us handle the singular terms involving $\frac{1}{\varepsilon}$
in $Q^{\varepsilon}(s)$. There are two integrals we need to consider:
\[
Q_{1}^{\varepsilon}(s)=\int_{D}\left(p_{u}(s,\xi,t,y)-p_{u}(s,\bar{\xi},t,y)\right)\frac{1}{\varepsilon}\phi'(-\xi_{2}/\varepsilon)u^{2}(\xi,s)\theta(\xi_{1},s)\textrm{d}\xi
\]
and 
\[
Q_{2}^{\varepsilon}(s)=\int_{D}\left(p_{u}(s,\xi,t,y)-p_{u}(s,\bar{\xi},t,y)\right)\frac{\nu}{\varepsilon^{2}}\phi''(-\xi_{2}/\varepsilon)\theta(\xi_{1},s)\textrm{d}\xi.
\]
We want to find their limits as $\varepsilon\downarrow0$. Suppose
$\beta$ is a smooth function on $\mathbb{R}^{2}$ with a compact
support, then 
\begin{align*}
\int_{\mathbb{R}^{2}}\beta(x)\phi'(-x_{2}/\varepsilon)\textrm{d}x & =\varepsilon\int_{-\infty}^{\infty}\int_{1/3}^{2/3}\beta(x_{1},-\varepsilon x_{2})\phi'(x_{2})\textrm{d}x_{2}\textrm{d}x_{1}\\
 & =\varepsilon^{2}\int_{\mathbb{R}}\int_{1/3}^{2/3}\frac{\partial\beta}{\partial x_{2}}(x_{1},-\varepsilon x_{2})\phi(x_{2})\textrm{d}x_{2}\textrm{d}x_{1}\\
 & +\varepsilon\int_{\mathbb{R}}\left[-\beta(x_{1},-\varepsilon/3)\right]dx_{1}
\end{align*}
so that 
\begin{equation}
\lim_{\varepsilon\downarrow0}\frac{1}{\varepsilon}\int_{\mathbb{R}^{2}}\beta(x)\phi'(-x_{2}/\varepsilon)\textrm{d}x=-\int_{\mathbb{R}}\beta(x_{1},0)\textrm{d}x_{1}.\label{w-1st-01}
\end{equation}
In other words, 
\begin{equation}
\lim_{\varepsilon\downarrow0}\frac{1}{\varepsilon}\phi'(-x_{2}/\varepsilon)\textrm{d}x_{1}\textrm{d}x_{2}=-\textrm{d}x_{1}\delta_{0}(\textrm{d}x_{2})\quad\textrm{ as }\varepsilon\rightarrow0.\label{w-1st-02}
\end{equation}
Hence 
\begin{align}
\lim_{\varepsilon\downarrow0}Q_{1}^{\varepsilon}(s) & =\lim_{\varepsilon\downarrow0}\int_{D}\left(p_{u}(s,\xi,t,y)-p_{u}(s,\bar{\xi},t,y)\right)\frac{1}{\varepsilon}\phi'(-\xi_{2}/\varepsilon)u^{2}(\xi,t)\theta(\xi_{1},t)\textrm{d}\xi\nonumber \\
 & =-\int_{\mathbb{R}}\left.\left(p_{u}(s,\xi,t,y)-p_{u}(s,\bar{\xi},t,y)\right)u^{2}(\xi,t)\right|_{\xi_{2}=0}\theta(\xi_{1},t)\textrm{d}\xi_{1}\nonumber \\
 & =0.\label{Q1-e1}
\end{align}
Similarly, since 
\begin{align*}
\int_{\mathbb{R}^{2}}\beta(x)\phi''(-x_{2}/\varepsilon)\textrm{d}x & =\varepsilon\int_{\mathbb{R}}\int_{1/3}^{2/3}\beta(x_{1},-\varepsilon x_{2})\phi''(x_{2})\textrm{d}x_{2}\textrm{d}x_{1}\\
 & =\varepsilon^{2}\int_{\mathbb{R}}\int_{1/3}^{2/3}\frac{\partial\beta}{\partial x_{2}}(x_{1},-\varepsilon x_{2})\phi'(x_{2})\textrm{d}x_{2}\textrm{d}x_{1}\\
 & =\varepsilon^{3}\int_{\mathbb{R}}\int_{1/3}^{2/3}\frac{\partial^{2}\beta}{\partial x_{2}^{2}}(x_{1},-\varepsilon x_{2})\phi(x_{2})\textrm{d}x_{2}\textrm{d}x_{1}\\
 & -\varepsilon^{2}\int_{\mathbb{R}}\frac{\partial\beta}{\partial x_{2}}(x_{1},-\varepsilon/3)\textrm{d}x_{1},
\end{align*}
so that 
\begin{equation}
\lim_{\varepsilon\downarrow0}\frac{1}{\varepsilon^{2}}\int_{\mathbb{R}^{2}}\beta(x)\phi''(-x_{2}/\varepsilon)\textrm{d}x=-\int_{\mathbb{R}}\frac{\partial\beta}{\partial x_{2}}(x_{1},0)\textrm{d}x_{1}.\label{2nd-phi-01}
\end{equation}
Therefore 
\begin{align}
\lim_{\varepsilon\downarrow0}Q_{2}^{\varepsilon}(s) & =\lim_{\varepsilon\downarrow0}\int_{D}\left(p_{u}(s,\xi,t,y)-p_{u}(s,\bar{\xi},t,y)\right)\frac{\nu}{\varepsilon^{2}}\phi''(-\xi_{2}/\varepsilon)\theta(\xi_{1},s)\textrm{d}\xi\nonumber \\
 & =\lim_{\varepsilon\downarrow0}\nu\int_{D}\left(p_{u}(s,\xi,t,y)-p_{u}(s,\bar{\xi},t,y)\right)\theta(\xi_{1},s)\frac{1}{\varepsilon^{2}}\phi''(-\xi_{2}/\varepsilon)\textrm{d}\xi\nonumber \\
 & =-2\nu\int_{\mathbb{R}}\left.\frac{\partial}{\partial\xi_{2}}\right|_{\xi_{2}=0}p_{u}(s,(\xi_{1},\xi_{2}),t,y)\theta(\xi_{1},s)\textrm{d}\xi_{1}.\label{Q2-e2}
\end{align}
Limits (\ref{Q1-e1}) and (\ref{Q2-e2}) together imply that 
\begin{align*}
\lim_{\varepsilon\downarrow0}Q^{\varepsilon}(s) & =\lim_{\varepsilon\downarrow0}\int_{D}\left(p_{u}(s,\xi,t,y)-p_{u}(s,\bar{\xi},t,y)\right)g_{\varepsilon}(\xi,s)\textrm{d}\xi\\
 & =\int_{D}\left(p_{u}(s,\xi,t,y)-p_{u}(s,\bar{\xi},t,y)\right)G(\xi,s)\textrm{d}\xi\\
 & -2\nu\int_{\mathbb{R}}\left.\frac{\partial}{\partial\xi_{2}}\right|_{\xi_{2}=0}p_{u}(s,(\xi_{1},\xi_{2}),t,y)\theta(\xi_{1},s)\textrm{d}\xi_{1},
\end{align*}
and it in turn yields that 
\begin{align*}
\lim_{\varepsilon\downarrow0}E_{\varepsilon}(x,t) & =\int_{D}\int_{0}^{t}\int_{D}K(x,y)\left(p_{u}(s,\xi,t,y)-p_{u}(s,\bar{\xi},t,y)\right)G(\xi,s)\textrm{d}\xi\textrm{d}s\textrm{d}y\\
 & -2\nu\int_{\mathbb{R}}\int_{0}^{t}\int_{D}K(x,y)\left.\frac{\partial}{\partial\xi_{2}}\right|_{\xi_{2}=0}p_{u}(s,(\xi_{1},\xi_{2}),t,y)\theta(\xi_{1},s)\textrm{d}\xi_{1}\textrm{d}s\textrm{d}y\\
 & =\int_{0}^{t}\int_{D}\mathbb{E}\left[1_{D}(X_{t}^{\xi,s})K(x,X_{t}^{\xi,s})-1_{D}(X_{t}^{\bar{\xi},s})K(x,X_{t}^{\bar{\xi},s})\right]G(\xi,s)\textrm{d}\xi\textrm{d}s\\
 & -2\nu\int_{\mathbb{R}}\int_{0}^{t}\left.\frac{\partial}{\partial\xi_{2}}\right|_{\xi_{2}=0}\mathbb{E}\left[1_{D}(X_{t}^{\xi,s})K(x,X_{t}^{\xi,s})\right]\theta(\xi_{1},s)\textrm{d}\xi_{1}\textrm{d}s.
\end{align*}

We then deal with other two terms in (\ref{key-m01}). Since 
\begin{equation}
W_{0}^{\varepsilon}(x)=\omega_{0}(x)+\phi(-x_{2}/\varepsilon)\frac{\partial u_{0}}{\partial x_{2}}(x_{1},0)\label{init-01}
\end{equation}
for $x=(x_{1},x_{2})$ with $x_{2}\leq0$, it follows easily that
\begin{equation}
\lim_{\varepsilon\downarrow0}W_{0}^{\varepsilon}(x)=\omega_{0}(x_{1},x_{2})+1_{\{0\}}(x_{2})\frac{\partial u_{0}}{\partial x_{2}}(x_{1},0)\equiv W_{0}(x),\label{W0-refor}
\end{equation}
which allows to replace $W_{0}^{\varepsilon}$ by $\omega_{0}$ in
the first term on the right-hand side of (\ref{key-m01}).

In the next step we show the limit of the second term on the right-hand
side (\ref{key-m01}) vanishes as $\varepsilon\downarrow0$. Hence
we need to handle the integrals 
\[
I_{\varepsilon}^{i}(x)=\int_{D}K^{i}(x,y)\sigma_{\varepsilon}(y,t)\textrm{d}y
\]
where $i=1,2$. By the definition of the cut-off function $\phi$,
\[
\int_{D}K(x,y)\sigma_{\varepsilon}(y,t)\textrm{d}y=\varepsilon\int_{-\infty}^{\infty}\left(\theta(y_{1},t)\int_{0}^{2/3}K(x,(y_{1},-\varepsilon y_{2}))\phi(y_{2})\textrm{d}y_{2}\right)\textrm{d}y_{1}.
\]
Since the integral against $y_{2}$ can be calculated as the following:
\begin{align*}
J_{\varepsilon}^{1}(x;y_{1}) & \equiv\int_{0}^{2/3}K^{1}(x,(y_{1},-\varepsilon y_{2}))\phi(y_{2})\textrm{d}y_{2}\\
 & =\frac{1}{2\pi}\int_{0}^{2/3}\left(\frac{-\varepsilon y_{2}-x_{2}}{(y_{1}-x_{1})^{2}+(x_{2}+\varepsilon y_{2})^{2}}-\frac{-\varepsilon y_{2}+x_{2}}{(y_{1}-x_{1})^{2}+(x_{2}-\varepsilon y_{2})^{2}}\right)\phi(y_{2})\textrm{d}y_{2}\\
 & \rightarrow\frac{1}{\pi}\frac{x_{2}}{(y_{1}-x_{1})^{2}+(x_{2})^{2}}\int_{0}^{2/3}\phi(y_{2})\textrm{d}y_{2}
\end{align*}
as $\varepsilon\downarrow0$, and similarly 
\begin{align*}
J_{\varepsilon}^{2}(x;y_{1}) & =\int_{0}^{2/3}K^{2}(x,(y_{1},-\varepsilon y_{2}))\phi(y_{2})\textrm{d}y_{2}\\
 & =\frac{1}{2\pi}(y_{1}-x_{1})\int_{0}^{2/3}\left(\frac{1}{(y_{1}-x_{1})^{2}+(x_{2}-\varepsilon y_{2})^{2}}-\frac{1}{(y_{1}-x_{1})^{2}+(x_{2}+\varepsilon y_{2})^{2}}\right)\phi(y_{2})\textrm{d}y_{2}\\
 & \rightarrow0
\end{align*}
as $\varepsilon\downarrow0$. Therefore 
\[
\lim_{\varepsilon\downarrow0}\int_{D}K(x,y)\sigma_{\varepsilon}(y,t)\textrm{d}y=0
\]
which allows us to drop the second term in (\ref{key-m01}) as $\varepsilon\downarrow0$.
The conclusion of the theorem now follows immediately. 
\end{proof}

\section{Monte-Carlo simulations}

We retain the assumptions in Section \ref{Sec33} and we assume that
the technical conditions in Theorem \ref{thm:main} are satisfied.
To implement Monte-Carlo simulations via the stochastic formulation,
Theorem \ref{thm:g}, we appeal to the ideas from random vortex method.
That is, either dropping the mathematical expectations by running
independent Brownian motions, or appealing to the strong law of large
numbers so that the expectations are replaced by running a number
of independent copies of the Taylor diffusion. Let us describe the
methods in more details.

Recall that we have established the following representation 
\begin{equation}
\begin{cases}
u(x,t)=\int_{D}\mathbb{E}\left[1_{D}(X_{t}^{\xi,0})K(x,X_{t}^{\xi,0})-1_{D}(X_{t}^{\bar{\xi},0})K(x,X_{t}^{\bar{\xi},0})\right]\omega_{0}(\xi)\textrm{d}\xi\\
\quad\quad\quad+\int_{D}\int_{0}^{t}\left[1_{D}(X_{t}^{\xi,s})K(x,X_{t}^{\xi,s})-1_{D}(X_{t}^{\bar{\xi},s})K(x,X_{t}^{\bar{\xi},s})\right]G(\xi,s)\textrm{d}\xi_{1}\textrm{d}s\\
\quad\quad\quad-2\nu\int_{\mathbb{R}}\int_{0}^{t}\left.\frac{\partial}{\partial\xi_{2}}\right|_{\xi_{2}=0}\mathbb{E}\left[1_{D}(X_{t}^{\xi,s})K(x,X_{t}^{\xi,s})\right]\theta(\xi_{1},s)\textrm{d}\xi_{1}\textrm{d}s, & \textrm{ for }x_{2}<0,\\
u(x,t)=\overline{u(\bar{x},t)}, & \textrm{ for }x_{2}>0,\\
X_{t}^{\xi,s}=\xi+\int_{s}^{t}u(X_{r}^{\xi,s},r)\textrm{d}r+\sqrt{2\nu}(B_{t}-B_{s}),\quad\textrm{ for }\xi\in\mathbb{R}^{2}\textrm{ and for }s\geq0, & \textrm{ for }\xi\in\mathbb{R}^{2},
\end{cases}\label{VDS-01}
\end{equation}
where $\omega_{0}(\xi)=\nabla\wedge u_{0}(x)$ and $B$ is a two dimensional
Brownian motion. In numerical schemes described below the boundary
vorticity $\theta$ has to be updated over the time during iterations,
rather than through modelling or boundary layer equations. However,
for small viscosity $\nu$, the term involving the boundary vorticity
may be dropped, and for the case where there is no external force,
then the previous system can be approximated by the following simpler
random vortex dynamics 
\begin{equation}
\begin{cases}
\tilde{u}(x,t)=\int_{D}\mathbb{E}\left[1_{D}(X_{t}^{\xi})K(x,X_{t}^{\xi})-1_{D}(X_{t}^{\bar{\xi}})K(x,X_{t}^{\bar{\xi}})\right]\omega_{0}(\xi)\textrm{d}\xi, & \textrm{ for }x_{2}<0,\\
\tilde{u}(x,t)=\overline{\tilde{u}(\bar{x},t)}, & \textrm{ for }x_{2}>0,\\
X_{t}^{\xi}=\xi+\int_{0}^{t}u(X_{r}^{\xi},r)\textrm{d}r+\sqrt{2\nu}B_{t},\quad\textrm{ for }\xi\in\mathbb{R}^{2}, & \textrm{ for }\xi\in\mathbb{R}^{2}.
\end{cases}\label{M-add}
\end{equation}
The use of this approximation will reduce the computing cost, and
therefore more computing hours may be saved for performing simulations
of boundary turbulent flows where the viscosity is small while demands
for finer scales.

Let us describe the numerical schemes we are going to perform the
numerical experiments. We divide the schemes into two steps. During
the first step we set up the discretization procedure for dealing
with (finite dimensional) integrals in time and in the space variables.
The discretization of this type appears in any numerical methods,
but a bit care is needed due to the appearance of boundary layer phenomena
for wall-bounded flows. According to Prandtl \citep{Prandtl1904},
there is a thin layer near the boundary within which the main stream
velocity decreases to zero sharply, hence there is substantial stress
at the wall, which in turn generates significant boundary vorticity.
Turbulence may be generated near the solid wall if the Reynolds number
is large. The boundary layer thickness $\delta>0$ of a fluid flow
with small viscosity is given by $\frac{\delta}{L}\sim\sqrt{\frac{1}{\textrm{Re}}}$,
where $\textrm{Re}$ is the Reynolds number $\textrm{Re}=\frac{VL}{\nu}$,
$V$ and $L$ are the typical velocity magnitude and typical length.
In numerical simulations, we can always first make a reduction, to
make the fluid dynamics equations dimensionless. That is, the typical
velocity and length may be fixed to the good size for printing the
outcomes. While we still prefer to use the Reynolds number $\textrm{Re}$,
so that $\nu\sim\frac{1}{\textrm{Re}}$. Suppose $\nu$ is small,
then near the boundary, the mesh $h_{2}$ for the $x_{2}$-coordinate
has to be far smaller than the boundary layer thickness $\delta\sim\sqrt{\frac{1}{\textrm{Re}}}$,
which leads to the first constraint: 
\begin{equation}
h_{2}\ll L\sqrt{\frac{1}{\textrm{Re}}}.\label{h-ratio01}
\end{equation}
The mesh size $h_{1}$ for $x_{1}$-coordinate should be comparable
to $h_{2}$ but not need to be larger than $h_{2}$. For the region
outside the boundary layer, we can use reasonable size of the mesh
$h_{0}$, and in general we choose $h_{0}\geq h_{1}\geq h_{2}$. We
may choose integers $N_{1}$, $N_{2}$ and $N_{0}$ according to the
following constraints: 
\begin{equation}
N_{1}h_{1}\sim L,\quad N_{2}h_{2}\geq\delta\quad\textrm{ and }N_{0}h_{0}\sim L.\label{meshes}
\end{equation}

We calculate the values at the lattice points 
\begin{equation}
x^{i_{1},i_{2}}=\begin{cases}
(i_{1}h_{1},i_{2}h_{2}) & \textrm{if }|i_{1}|\leq N_{1}\textrm{ and }|i_{2}|\leq N_{2};\\
(i_{1}h_{0},i_{2}h_{0}) & \textrm{if }|i_{1}|\leq N_{0}\textrm{ and }N_{2}<|i_{2}|\leq N_{0}+N_{2}.
\end{cases}\label{xii-01}
\end{equation}
That is, $x^{i_{1},i_{2}}=(x_{1}^{i_{1},i_{2}},x_{2}^{i_{1},i_{2}})$
where 
\begin{equation}
\begin{cases}
x_{1}^{i_{1},i_{2}}=i_{1}h_{1}\textrm{ and }x_{2}^{i_{1},i_{2}}=i_{2}h_{2} & \textrm{if }|i_{1}|\leq N_{1}\textrm{ and }|i_{2}|\leq N_{2};\\
x_{1}^{i_{1},i_{2}}=i_{1}h_{0}\textrm{ and }x_{2}^{i_{1},i_{2}}=i_{2}h_{0} & \textrm{if }|i_{1}|\leq N_{0}\textrm{ and }N_{2}<|i_{2}|\leq N_{0}+N_{2}.
\end{cases}\label{xi1i2-01}
\end{equation}
The total number of the lattice points we require is given by 
\[
n=2(N_{1}+1)(N_{2}+1)+2(N_{0}+1)^{2}
\]
which largely determine the computational cost. Thus far we have described
the discretization of space variables.

For time $t$, we can use a unified scheme, at each step we use time
duration $h>0$, to be chosen properly.

The Taylor diffusion (\ref{X-sde1}) has to be issued from the lattice
points in the numerical scheme at starting time $t_{0}=0$, $t_{1}=h$,
and so on $t_{k}=kh$ for $k=0,1,2,\cdots$. A simple scheme, but
it is not claimed to be the best one, for the discretization of the
SDE can be set up as the following: 
\[
X_{t_{l+1}}^{i_{1},i_{2};t_{l}}=x^{i_{1},i_{2}}+h\tilde{u}(x^{i_{1},i_{2}},0)+\sqrt{2\nu}(B_{t_{l+1}}-B_{t_{l}})
\]
(for $l=0,1,2,\cdots$), then 
\[
X_{t_{k+1}}^{i_{1},i_{2};t_{l}}=X_{t_{k}}^{i_{1},i_{2};t_{l}}+h\tilde{u}(X_{t_{k}}^{i_{1},i_{2};t_{l}},t_{k})+\sqrt{2\nu}(B_{t_{k+1}}-B_{t_{k}})
\]
for $k=l,l+1,\cdots$, where 
\[
\tilde{u}(x,t_{0})=u_{0}(x),
\]
\begin{align}
\tilde{u}(x,t_{k+1}) & =\sum_{i_{1},i_{2}}A_{i_{1},i_{2}}\omega_{i_{1},i_{2}}\mathbb{E}\left[K_{D}(x,X_{t_{k}}^{i_{1},i_{2};0})-K_{D}(x,X_{t_{k}}^{i_{1},-i_{2};0})\right]\nonumber \\
 & +\sum_{i_{1},i_{2}}A_{i_{1},i_{2}}\sum_{l=0}^{k}hG_{i_{1},i_{2};k}\mathbb{E}\left[K_{D}(x,X_{t_{k}}^{i_{1},i_{2};t_{l}})-K_{D}(x,X_{t_{k}}^{i_{1},-i_{2};t_{l}})\right]\nonumber \\
 & +2\nu\sum_{i_{1}}\frac{h_{1}}{h_{2}}\sum_{l=0}^{k}h\left(\mathbb{E}\left[K_{D}(x,X_{t_{k}}^{i_{1},-h_{2};t_{l}})\right]-\mathbb{E}\left[K_{D}(x,X_{t_{k}}^{i_{1},0;t_{l}})\right]\right)\theta_{i_{1};l},\label{g-s-01}
\end{align}
\[
\theta_{i_{1};l}=\frac{\tilde{u}^{1}((i_{1}h_{1},-h_{2}),t_{l})}{h_{2}},\quad\textrm{ for }l=0,\ldots,k,
\]
for $x_{2}<0$, and 
\[
(\tilde{u}^{1}(x,t),\tilde{u}^{2}(x,t))=(\tilde{u}^{1}(x,t),-\tilde{u}^{2}(\bar{x},t))\quad\textrm{ for }x_{2}>0.
\]
which is solved for $i_{1},i_{2}$ within the regions assigned, where
\begin{equation}
A_{i_{1},i_{2}}=\begin{cases}
h_{1}h_{2} & \textrm{if }|i_{1}|\leq N_{1}\textrm{ and }|i_{2}|\leq N_{2};\\
h_{0}h_{0} & \textrm{if }|i_{1}|\leq N_{0}\textrm{ and }N_{2}<|i_{2}|\leq N_{0}+N_{2},
\end{cases}\label{Aii-01}
\end{equation}
and 
\begin{equation}
\omega_{i_{1},i_{2}}=\omega_{0}(x^{i_{1},i_{2}}),\quad G_{i_{1},i_{2};k}=G(x^{i_{1},i_{2}},t_{k})\label{ome-d}
\end{equation}
and $\tilde{u}(x,0)=u_{0}(x)$. Note that $\tilde{u}=(\tilde{u}^{1},\tilde{u}^{2})$.
Also here for simplicity, we have introduce the notation that 
\begin{equation}
K_{D}(x,y)=\begin{cases}
K(x,y) & \textrm{ if }y\in D,\\
0 & \textrm{ if }y\notin D.
\end{cases}\label{KD-01-1}
\end{equation}

In the second step, one has to handle the mathematical expectations,
which is the core of the Monte-Carlo schemes. Slightly different approaches
lead to the following schemes.

\subsection{Numerical scheme 1}

We are now in a position to formulate our first Monte-Carlo scheme
based on the simple random vortex dynamic (\ref{M-add}). In this
scheme, we drop the expectation by running independent (two dimensional)
Brownian motions. More precisely we run the following stochastic differential
equations: 
\[
X_{t_{l+1}}^{i_{1},i_{2};t_{l}}=x^{i_{1},i_{2}}+h\tilde{u}(x^{i_{1},i_{2}},0)+\sqrt{2\nu}(B_{t_{l+1}}^{i_{1},i_{2}}-B_{t_{l}}^{i_{1},i_{2}})
\]
(for $l=0,1,2,\cdots$), then 
\[
X_{t_{k+1}}^{i_{1},i_{2};t_{l}}=X_{t_{k}}^{i_{1},i_{2};t_{l}}+h\tilde{u}(X_{t_{k}}^{i_{1},i_{2};t_{l}},t_{k})+\sqrt{2\nu}(B_{t_{k+1}}^{i_{1},i_{2}}-B_{t_{k}}^{i_{1},i_{2}})
\]
for $k=l,l+1,\cdots$, and 
\[
\tilde{u}(x,0)=u_{0}(x),
\]
\begin{align}
\tilde{u}(x,t_{k+1}) & =\sum_{i_{1},i_{2}}A_{i_{1},i_{2}}\omega_{i_{1},i_{2}}\left[K_{D}(x,X_{t_{k}}^{i_{1},i_{2};0})-K_{D}(x,X_{t_{k}}^{i_{1},-i_{2};0})\right]\nonumber \\
 & +\sum_{i_{1},i_{2}}A_{i_{1},i_{2}}\sum_{l=0}^{k}hG_{i_{1},i_{2};k}\left[K_{D}(x,X_{t_{k}}^{i_{1},i_{2};t_{l}})-K_{D}(x,X_{t_{k}}^{i_{1},-i_{2};t_{l}})\right]\nonumber \\
 & +2\nu\sum_{i_{1}}\frac{h_{1}}{h_{2}}\sum_{l=0}^{k}h\left(\left[K_{D}(x,X_{t_{k}}^{i_{1},-h_{2};t_{l}})\right]-\mathbb{E}\left[K_{D}(x,X_{t_{k}}^{i_{1},0;t_{l}})\right]\right)\theta_{i_{1};l},\label{M-3}
\end{align}
\[
\theta_{i_{1};l}=\frac{\tilde{u}^{1}((i_{1}h_{1},-h_{2}),t_{l})}{h_{2}},\qquad\textrm{ for }l=0,\ldots,k,
\]
where $A_{i_{1},i_{2}}$ , $\omega_{i_{1},i_{2}}$ are given (\ref{Aii-01})
and (\ref{ome-d}), and $B^{i_{1},i_{2}}$ (where $(i_{1},i_{2})$
runs through the assigned indices) is a family of independent two
dimensional Brownian motions.

The convergence of this scheme of course can not be taken as granted
which itself is an important mathematical problem, for the case without
surface boundary, the convergence has been proved in \citep{AndersonGreengard1985},
\citep{Long1988} and \citep{Goodman1987}. We will investigate this
problem for surface boundary case in a separate work.

\subsection{Numerical scheme 2}

In this scheme we appeal to the strong law of large numbers, that
is we replace the expectation by its average. Therefore we solve the
following stochastic differential equations: 
\[
X_{t_{l+1}}^{i_{1},i_{2};t_{l}}=x^{i_{1},i_{2}}+h\tilde{u}(x^{i_{1},i_{2}},0)+\sqrt{2\nu}(B_{t_{l+1}}^{m}-B_{t_{l}}^{m})
\]
(for $l=0,1,2,\cdots$), then 
\[
X_{t_{k+1}}^{m;i_{1},i_{2};t_{l}}=X_{t_{k}}^{m;i_{1},i_{2};t_{l}}+h\tilde{u}(X_{t_{k}}^{m;i_{1},i_{2};t_{l}},t_{k})+\sqrt{2\nu}(B_{t_{k+1}}^{m}-B_{t_{k}}^{m})
\]
for $k=l,l+1,\cdots$, $m=1,\ldots,N$, and 
\[
\tilde{u}(x,0)=u_{0}(x),
\]
\begin{align}
\tilde{u}(x,t_{k+1}) & =\sum_{i_{1},i_{2}}A_{i_{1},i_{2}}\omega_{i_{1},i_{2}}\frac{1}{N}\sum_{m=1}^{N}\left[K_{D}(x,X_{t_{k}}^{m;i_{1},i_{2};0})-K_{D}(x,X_{t_{k}}^{m;i_{1},-i_{2};0})\right]\nonumber \\
 & +\sum_{i_{1},i_{2}}A_{i_{1},i_{2}}\sum_{l=0}^{k}hG_{i_{1},i_{2};k}\frac{1}{N}\sum_{m=1}^{N}\left[K_{D}(x,X_{t_{k}}^{m;i_{1},i_{2};t_{l}})-K_{D}(x,X_{t_{k}}^{m;i_{1},-i_{2};t_{l}})\right]\nonumber \\
 & +2\nu\sum_{i_{1}}\frac{h_{1}}{h_{2}}\sum_{l=0}^{k}h\frac{1}{N}\sum_{m=1}^{N}\left[K_{D}(x,X_{t_{k}}^{m;i_{1},-h_{2};t_{l}})-K_{D}(x,X_{t_{k}}^{m;i_{1},0;t_{l}})\right]\theta_{i_{1};l},\label{M-3-1}
\end{align}
\[
\theta_{i_{1};l}=\frac{\tilde{u}^{1}((i_{1}h_{1},-h_{2}),t_{l})}{h_{2}}\quad\textrm{ for }l=0,\cdots,k
\]
The data $A_{i_{1},i_{2}}$, $\omega_{i_{1},i_{2}}$ and so on are
the same as in the previous scheme.

\subsection{Numerical experiment}

Since the aim of the paper is to report a theoretical framework for
performing Monte-Carlo simulations for incompressible fluid flows
passing over a solid wall, to demonstrate the usefulness of the method,
we report a simple numerical experiment for the fluid flow with small
viscosity $\nu>0$, for which the updating of the boundary stress
$\theta$ seems not necessary, and therefore less demanding in computing
power. We hope to improve the numerical experiment results in future
work. 

Recall the Reynolds number is given by $\textrm{Re}=\frac{U_{0}L}{\nu}$
where $U_{0}$ the main stream velocity and $L$ is a typical size,
which is determined by the initial velocity in our experiments. For
simplicity, in our experiment, we set 
\begin{equation}
\nu=0.15.\label{viscosity-c1}
\end{equation}
The typical length scale 
\begin{equation}
L=2\pi.\label{typ-s-01}
\end{equation}
This gives the initial main stream velocity 
\begin{equation}
U_{0}=\frac{\nu}{L}\textrm{Re}=0.02388535\times\textrm{Re}.\label{k-value}
\end{equation}
The initial velocity in our experiments is chosen as the following:
\[
u_{0}(x_{1},x_{2})=(-U_{0}\sin x_{2},0)
\]
so that the initial vorticity 
\begin{equation}
\omega_{0}(x_{1},x_{2})=U_{0}\cos x_{2}.\label{int-vort-01}
\end{equation}

In the numerical experiment, we choose $\textrm{Re}=2500$. Then $U_{0}=\alpha=59.7$.
Choose 
\[
h_{2}\ll\delta=L\sqrt{\frac{1}{\textrm{Re}}}=0.1256.
\]
Let $h_{1}=0.3$, and $h_{0}=0.4$. The numerical result is demonstrated
in the figure.

\begin{figure}
\centering %
\begin{tabular}{cc}
\includegraphics[width=4cm,height=3.2cm]{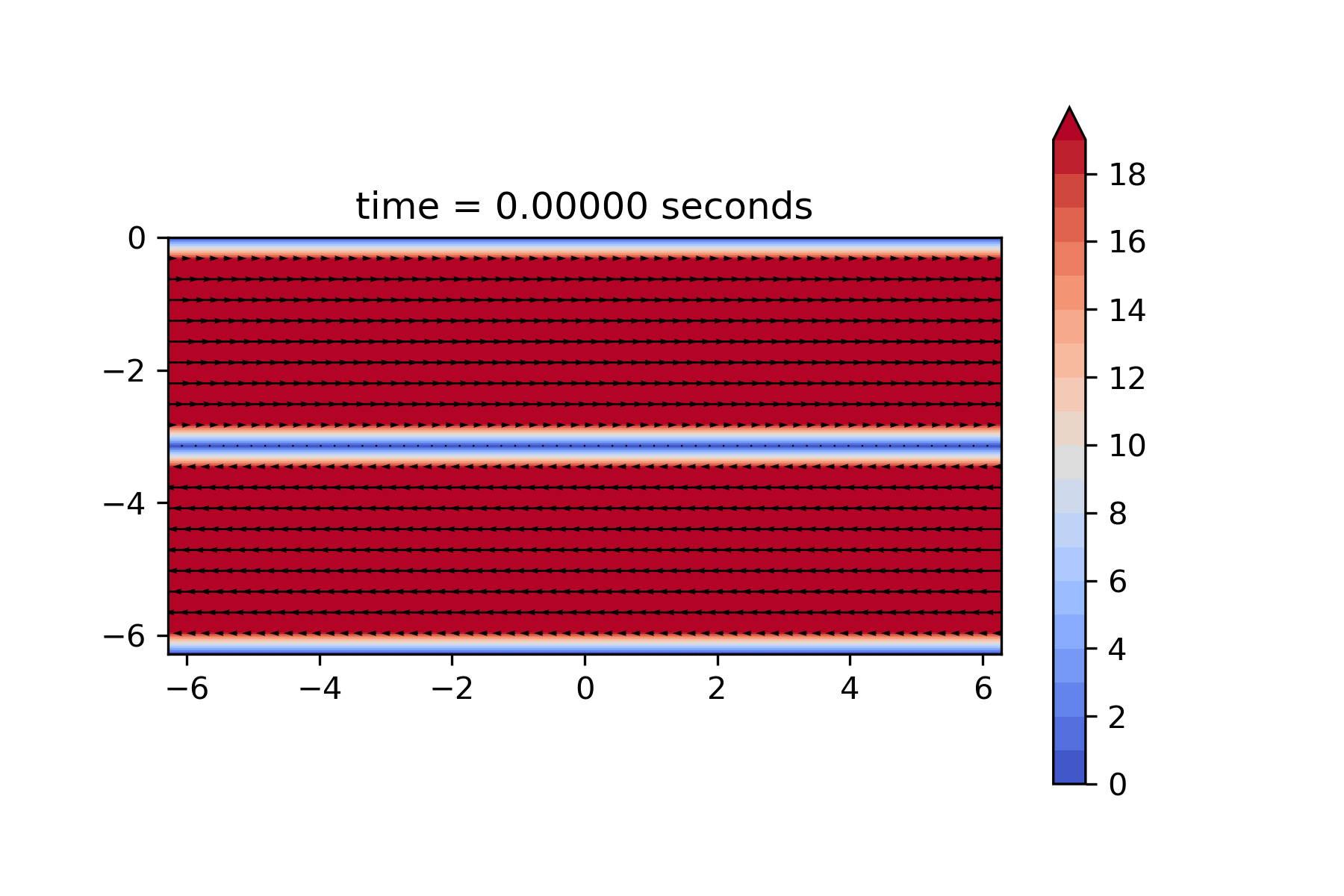}  & \includegraphics[width=4cm,height=3.2cm]{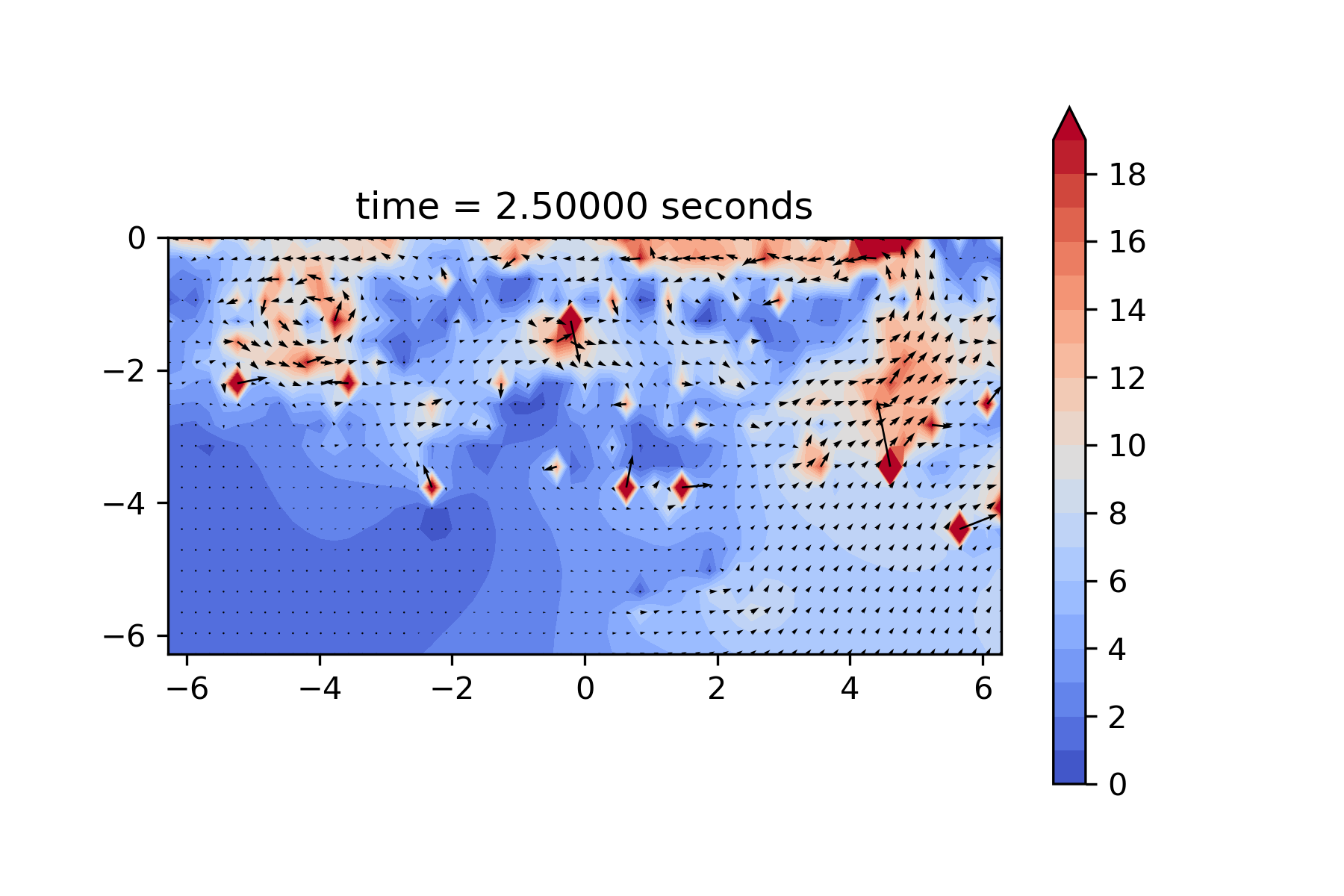}\tabularnewline
\includegraphics[width=4cm,height=3.2cm]{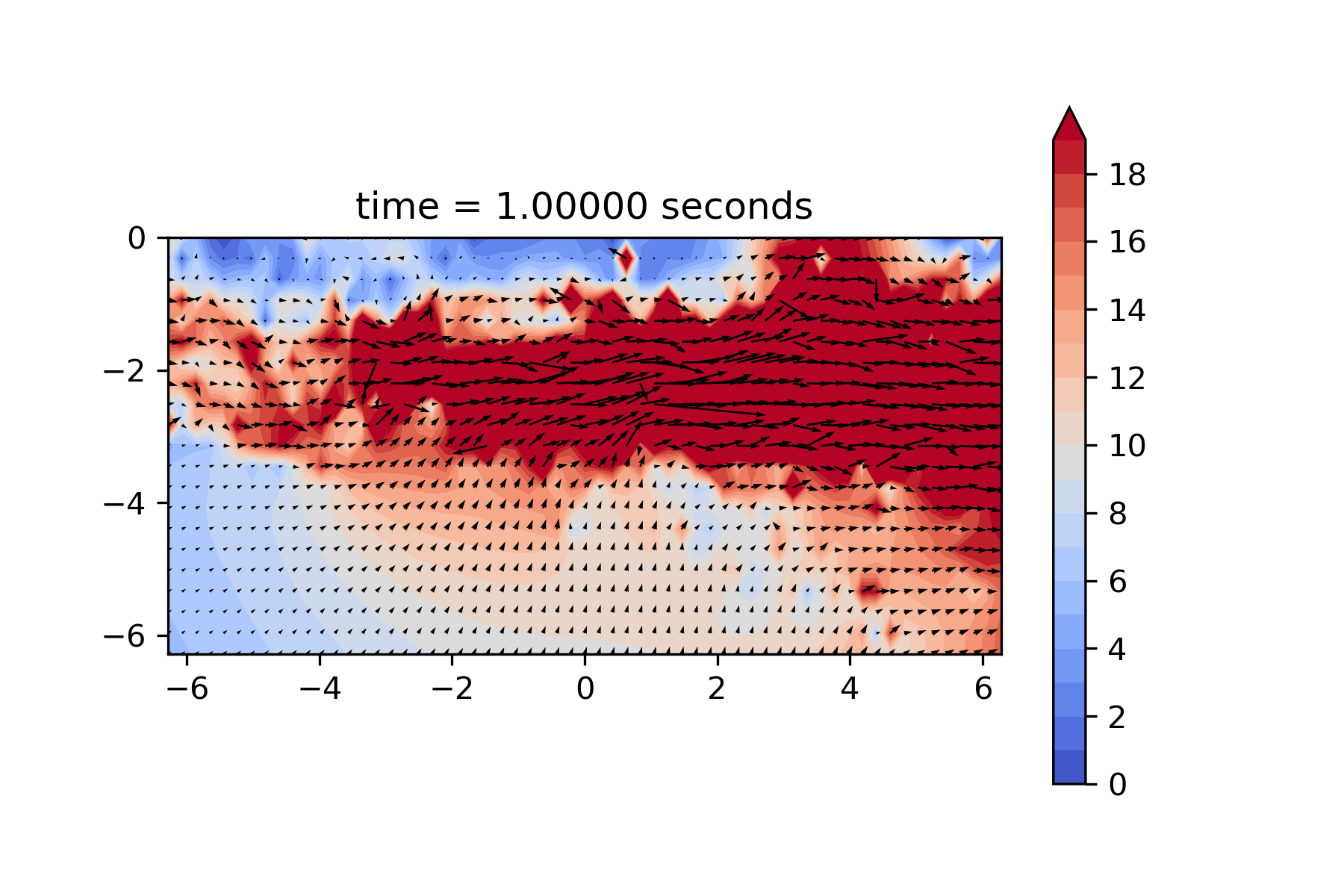}  & \includegraphics[width=4cm,height=3.2cm]{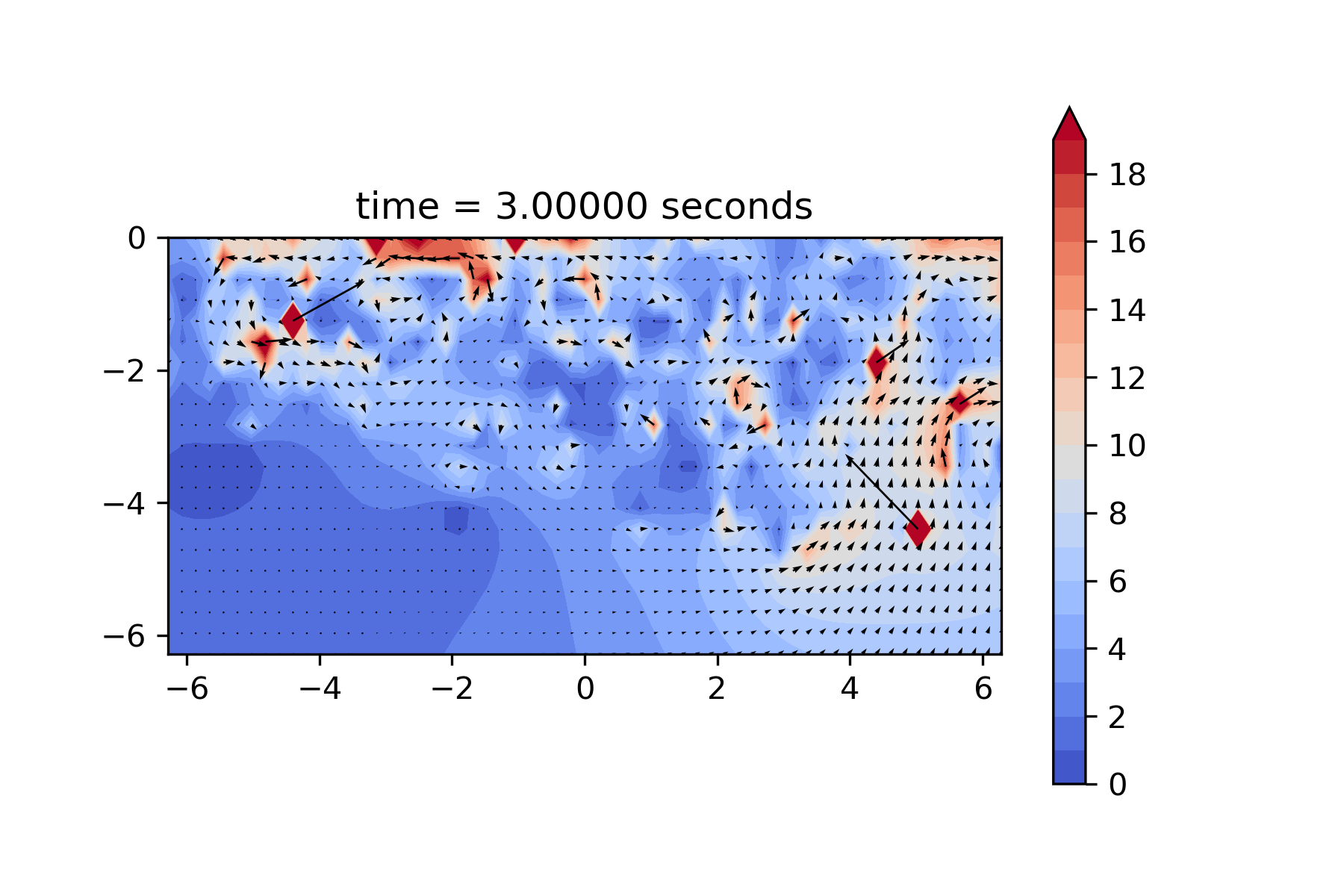}\tabularnewline
\includegraphics[width=4cm,height=3.2cm]{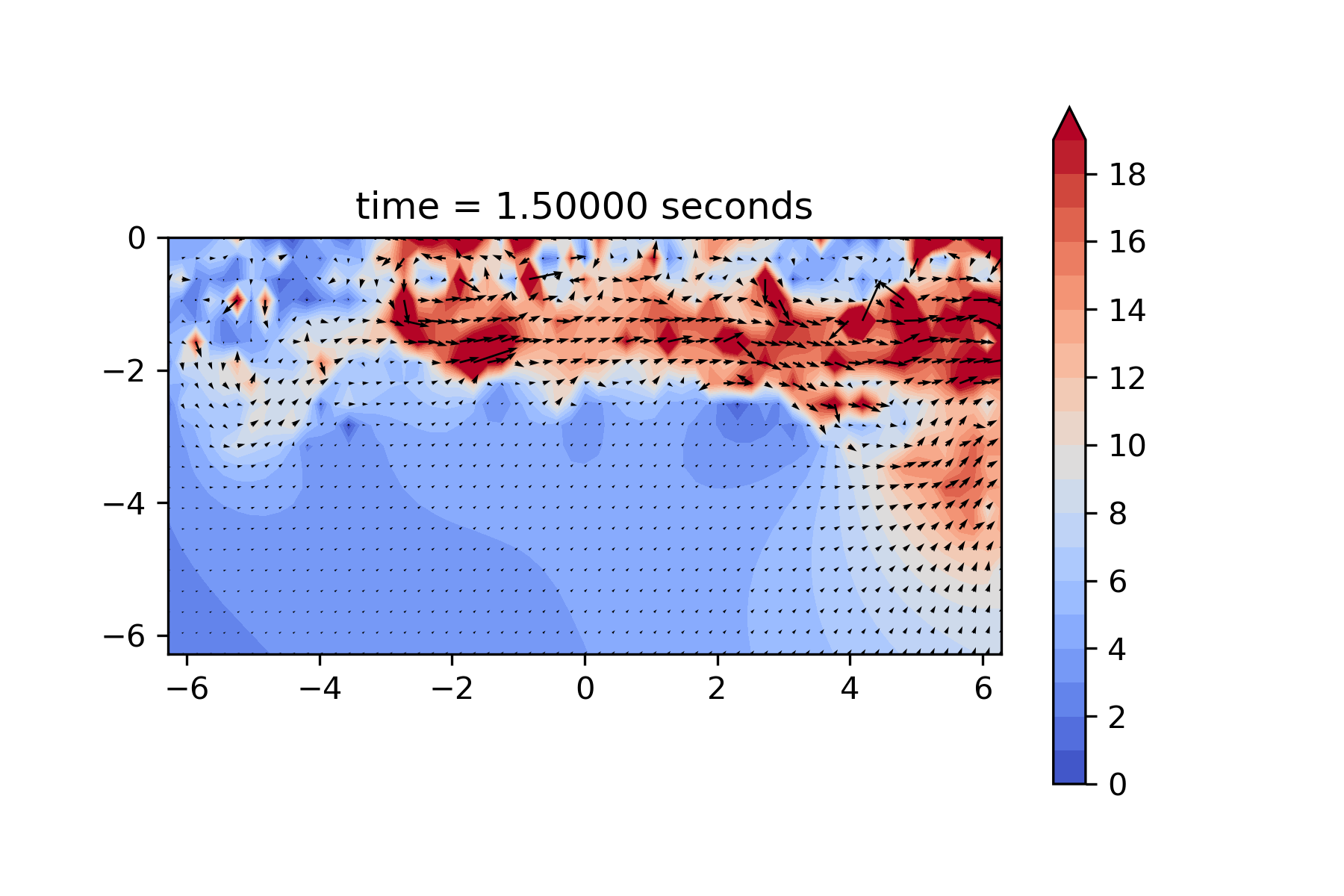}  & \includegraphics[width=4cm,height=3.2cm]{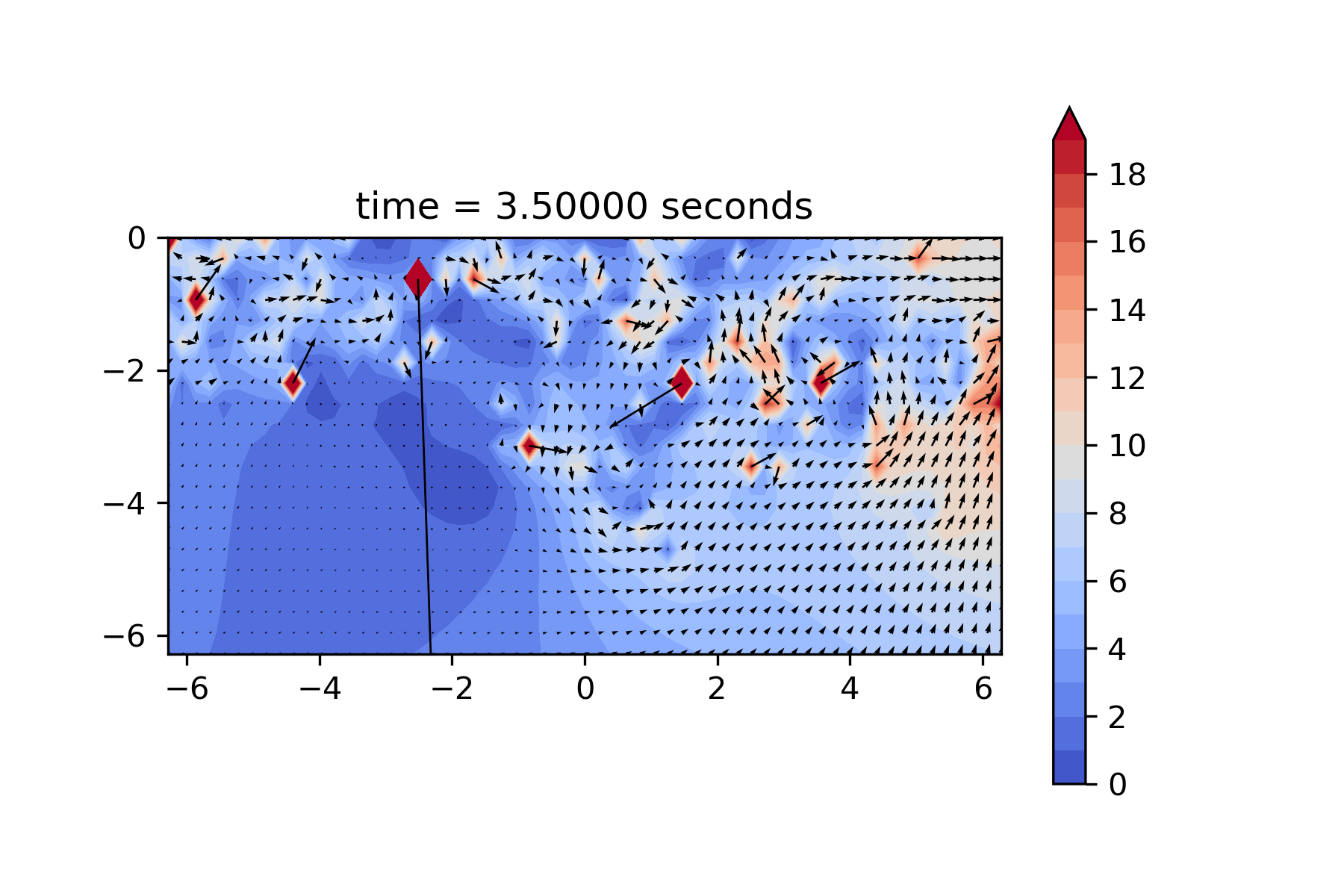}\tabularnewline
\includegraphics[width=4cm,height=3.2cm]{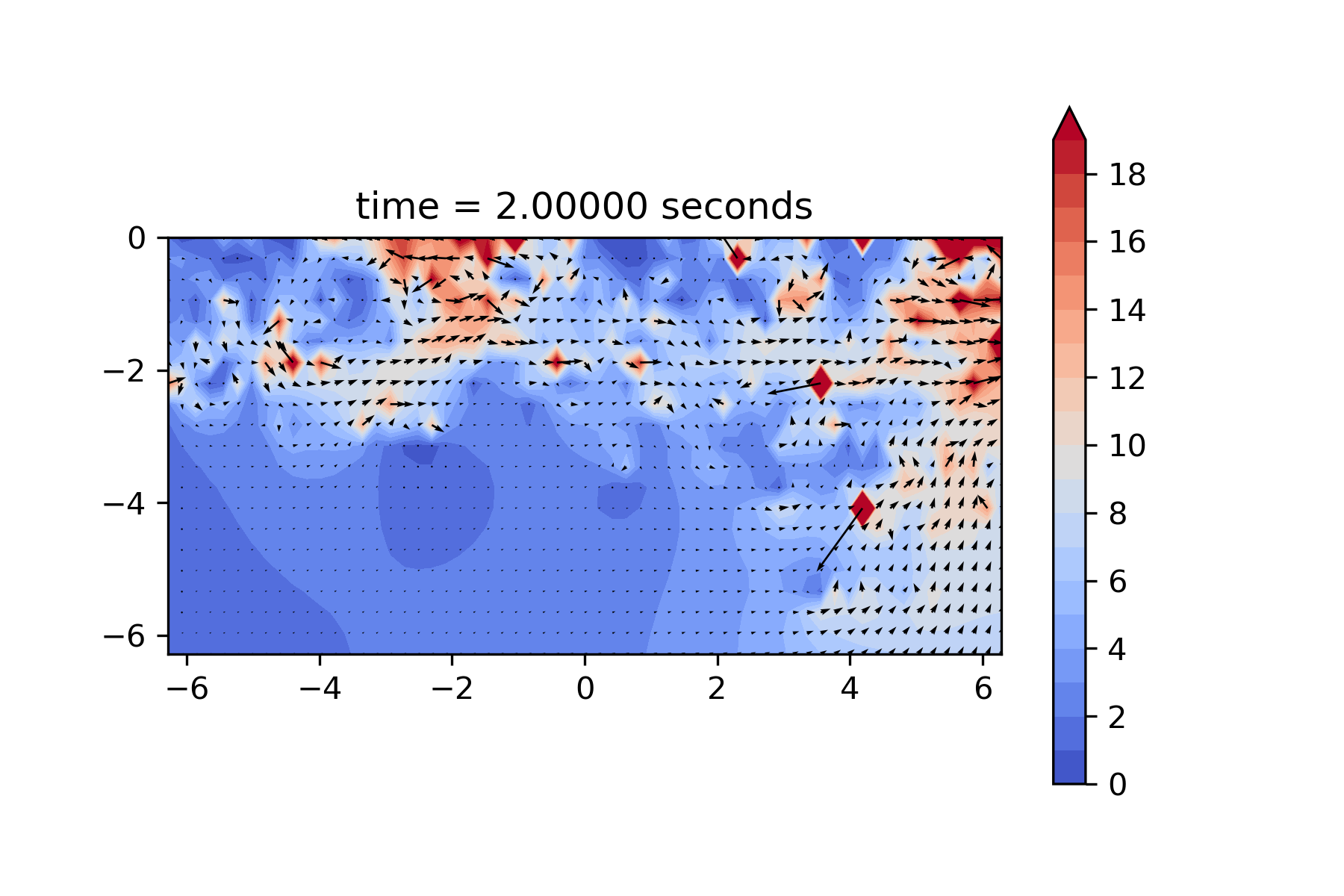}  & \includegraphics[width=4cm,height=3.2cm]{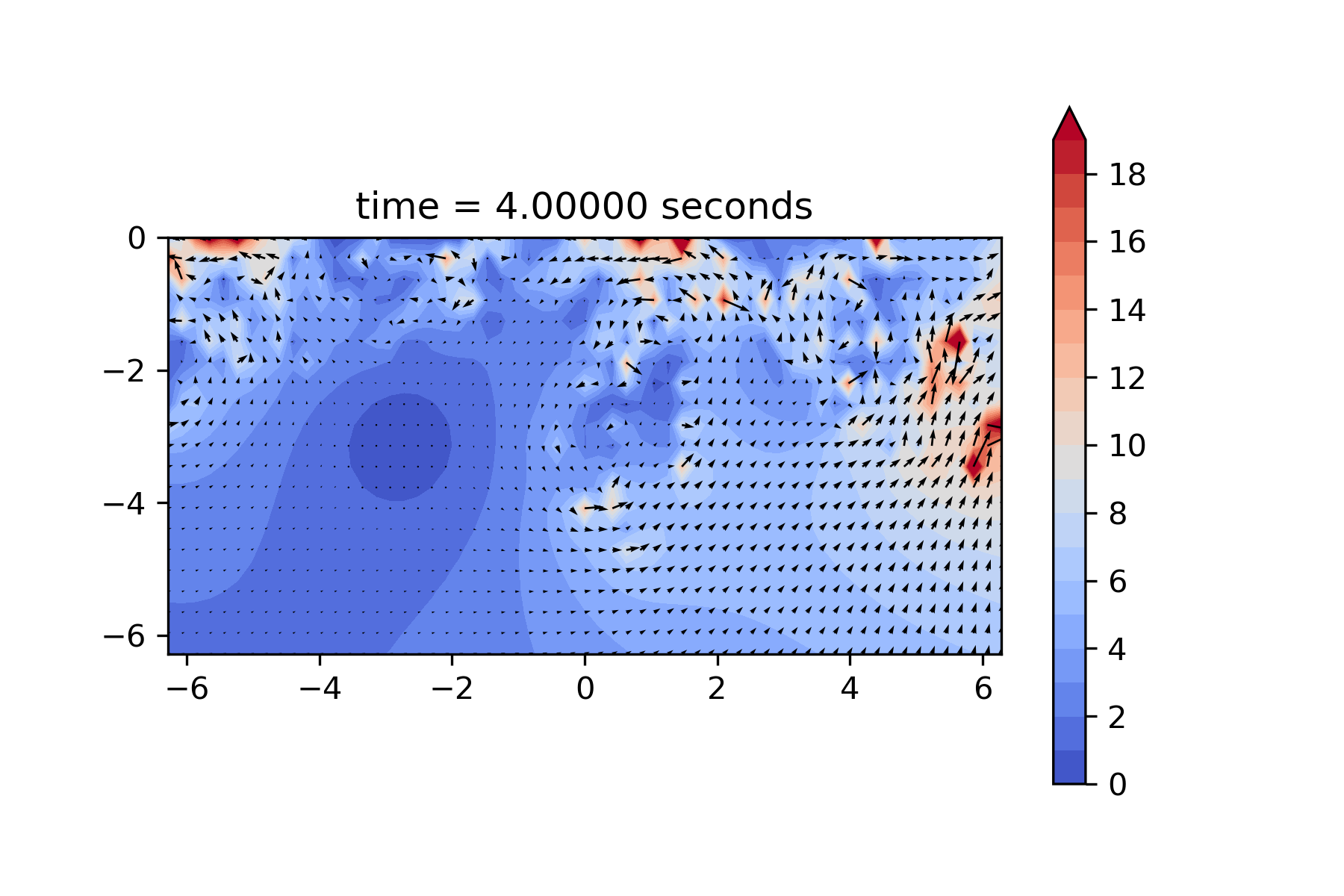}\tabularnewline
\end{tabular}
\end{figure}

\end{document}